\newcommand{\hnote}[1]{\footnote{\textcolor{purple}{\bf Henry:} #1} }
\newcommand{\ketbra}[2]{|#1\rangle\! \langle #2|}
\newcommand{\Tr}{\mbox{\rm Tr}}
\DeclareMathOperator*{\Ex}{\mathbf{E}}
\newcommand{\RIT}{\textsc{Raw-Influence-Estimator}}
\newcommand{\QIT}{\textsc{Influence-Estimator}}
\newcommand{\QJT}{\textsc{Unitary-Junta-Tester}}
\newcommand{\TOne}{\textsc{Tester-I}}
\newcommand{\TTwo}{\textsc{Tester-II}}
\newcommand{\GGT}{\textsc{GGT}}
\newcommand{\QGGT}{\textsc{Quantum-GGT}}
\newcommand{\QLearner}{\textsc{Quantum-Junta-Learner}}
\newcommand{\Tomography}{\textsc{Tomography}}
\newcommand{\QStatePrep}{\textsc{Quantum-State-Preparation}}
\newcommand{\PauliSample}{\textsc{Pauli-Sample}}
\newcommand{\ignore}[1]{}
\title{Testing and Learning Quantum Juntas Nearly Optimally
\vspace{1em}}
\author{Thomas Chen\\[0.25em] \textsl{Columbia University} \and Shivam Nadimpalli\\[0.25em] \textsl{Columbia University}\\ \and Henry Yuen\\[0.25em] \textsl{Columbia University} \vspace{1em}}
\date{ \today }
\begin{document}

\pagenumbering{gobble}

\maketitle

\begin{abstract}
    We consider the problem of testing and learning \emph{quantum $k$-juntas}: $n$-qubit unitary matrices which act non-trivially on just $k$ of the $n$ qubits and as the identity on the rest. As our main algorithmic results, we give 
    \begin{enumerate}
        \item A $\widetilde{O}(\sqrt{k})$-query quantum algorithm that can distinguish quantum $k$-juntas from unitary matrices that are ``far'' from every quantum $k$-junta; and
        \item A $O(4^k)$-query algorithm to \emph{learn} quantum $k$-juntas.
    \end{enumerate}
    
    We complement our upper bounds for testing and learning quantum $k$-juntas with near-matching lower bounds of $\Omega(\sqrt{k})$ and $\Omega(4^k/k)$, respectively. Our techniques are Fourier-analytic and make use of a notion of \emph{influence} of qubits on unitaries.
\end{abstract}

\newpage
\pagenumbering{arabic}
\section{Introduction}
\label{sec:intro}

Certifying and characterizing the dynamical behavior of quantum systems is a fundamental task in physics which is often achieved via \emph{quantum process tomography} (QPT) \cite{chuang1997prescription}. However, QPT is extremely resource-intensive. For example, all known methods for learning a classical description of an arbitrary $n$-qubit unitary operator, given black-box query access to it, require $\Omega(4^n)$ queries to the unitary~\cite{gutoski2014process}. On the other hand, this complexity can be significantly reduced if, instead of learning the entire description of the unknown unitary, we want to \emph{test} whether the unitary satisfies a specific property. This naturally leads us to consider the well-studied \emph{property testing} framework in theoretical computer science \cite{Goldreich2010,Bhattacharyya2022}.

The setup of property testing (in the context of unitary dynamics, as it pertains to this paper) is as follows: Given {oracle access}\footnote{More formally, an oracle for a unitary $U$ takes in as input a quantum state $\ket{\psi}$ and outputs $U\ket{\psi}$.} to a unitary operator $U$ and its inverse $U^\dagger$, our goal is to determine whether $U$ has a certain property or is ``far''\footnote{See \Cref{def:distance} for a formal definition of ``$\dist$,'' the distance metric.} from every unitary operator satisfying that property using a small number of calls to the oracles to $U$ and $U^\dagger$. We also allow for the algorithm to output an incorrect answer with some small probability. Several natural properties of unitary dynamics have been studied in this model, such as commutativity, diagonality, membership in the Pauli basis, etc. We refer the interested reader to Section~5.1 of the survey by Montanaro~and~de~Wolf on quantum property testing \cite{Montanaro2016} for more information. 

The property we are interested in testing here is that of being a \emph{$k$-junta}: We say that an $n$-qubit unitary $U$ is a \emph{$k$-junta} if it acts ``non-trivially'' on only $k$ of the $n$-qubits (see \Cref{def:quantum-junta} for a formal definition). Like Montanaro~and~Osborne~\cite{Montanaro2010}, we will refer to a unitary $k$-junta as a \emph{quantum $k$-junta}, to distinguish it from a $k$-junta Boolean function (or simply, Boolean $k$-junta). As a special case, the notion of quantum $k$-juntas captures the well-studied problem of testing if a Boolean function $f\isazofunc$ is a $k$-junta (cf. \Cref{question:bjt}).

\begin{problem}[Testing quantum $k$-juntas] \label{question:qjt}
	Given oracle access to a unitary $U$ and its inverse $U^\dagger$ acting on $n$ qubits and $\eps > 0$, decide with probability at least $9/10$ if $U$ is a $k$-junta or if $\dist(U, V) \geq\eps$ for all quantum $k$-juntas $V$ acting on $n$ qubits. 
\end{problem}

Our first main result is an algorithm for testing if a unitary $U$ is $k$-junta using $\wt{O}(\sqrt{k})$ queries to $U$ and $U^\dagger$, where $\wt{O}(\cdot)$ hides polylogarithmic factors of $k$. Crucially, the query complexity of the tester is independent of $n$, the total number of qubits in $U$. We complement this with a near-matching lower bound of $\Omega(\sqrt{k})$ for the junta testing problem, implying that our algorithm is optimal up to a polylogarithmic factor in $k$.

\begin{theorem}[Informal version of \Cref{thm:main-tester,thm:reduction}]
	Quantum $k$-juntas can be tested with $\wt{O}    (\sqrt{k})$ queries. Furthermore, testing quantum $k$-juntas requires $\Omega(\sqrt{k})$ queries.
\end{theorem}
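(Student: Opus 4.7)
The plan is to treat the upper and lower bounds separately, both organized around the Pauli (``quantum Fourier'') decomposition $U = \sum_P \widehat{U}(P)\,P$ where $P$ ranges over the $n$-qubit Pauli strings and $\sum_P |\widehat{U}(P)|^2 = 1$ in the natural normalization. Defining the influence of qubit $i$ by $\mathbf{Inf}_i(U) := \sum_{P : P_i \neq I} |\widehat{U}(P)|^2$, it is qualitatively clear that $U$ is a quantum $k$-junta iff at most $k$ qubits have nonzero influence; my first step would be to upgrade this to a quantitative claim, namely that if $U$ is $\eps$-far from every quantum $k$-junta then for every $S \subseteq [n]$ with $|S|=k$ the outside weight $w_S(U) := \sum_{P : \operatorname{supp}(P) \not\subseteq S} |\widehat{U}(P)|^2$ is $\Omega(\operatorname{poly}(\eps))$. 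This follows by showing that the best $S$-junta approximant to $U$ is (up to normalization) the projection of $U$ onto the span of Paulis supported on $S$, mirroring the Boolean Fourier case.

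For the upper bound, the central primitive is \emph{Pauli sampling}: applying $U$ to one half of a maximally entangled state and then Bell-measuring outputs a sample $P$ with probability $|\widehat{U}(P)|^2$ using $O(1)$ queries to $U$. Iterating Pauli sampling while tracking the union $R$ of observed supports, and rejecting once $|R| > k$, already gives a classical $\widetilde{O}(k)$-query tester by the quantitative influence bound. To drop the cost to $\widetilde{O}(\sqrt{k})$, I would apply amplitude amplification on top of Pauli sampling: given the current $R$, amplify the subspace $\{P : \operatorname{supp}(P) \not\subseteq R\}$ (whose weight is $w_R(U) \geq \Omega(\operatorname{poly}(\eps))$) to produce a new relevant qubit in $\widetilde O(1/\sqrt{w_R})$ queries. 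Summing over the $k{+}1$ rounds needed to push $|R| > k$ yields the target $\widetilde{O}(\sqrt{k})$ bound, in direct analogy with the Ambainis--Belovs--Regev--de Wolf speedup for Boolean juntas.

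For the matching lower bound, I would reduce from quantum testing of \emph{Boolean} $k$-juntas, where an $\Omega(\sqrt{k})$ lower bound is known. Given $f \colon \{0,1\}^n \to \{0,1\}$, the phase unitary $U_f := \sum_x (-1)^{f(x)} \ketbra{x}{x}$ is a quantum $k$-junta iff $f$ is a Boolean $k$-junta, and a direct calculation shows $\dist(U_f, U_g)$ is a constant times $\sqrt{\Pr_x[f(x) \neq g(x)]}$, so the far-ness parameter transfers with polynomial loss; since each query to $U_f$ is simulatable by a single Boolean phase-oracle query, any $q$-query quantum junta tester for unitaries yields a $q$-query quantum Boolean junta tester, transferring the lower bound. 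A fallback is a direct two-distribution argument: take YES to be a random Pauli rotation $e^{i\theta P}$ on a random $k$-subset, NO the same on a random $(k{+}1)$-subset, and invoke the polynomial or adversary method to show $\Omega(\sqrt{k})$ queries are needed.

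The step I expect to be most delicate is the amortized cost analysis in the upper bound: amplitude amplification typically needs a fairly accurate estimate of $w_R$, not merely a lower bound, so the algorithm likely has to wrap the amplification in exponential search over guesses of $w_R$ (or use a BHMT-style fixed-point variant) and carefully control the failure probability across all $k{+}1$ rounds without blowing up the $\operatorname{polylog}(k)$ overhead. On the lower-bound side, the main subtlety is verifying the distance identity for the Boolean reduction under whatever normalization the paper fixes for $\dist$; if that identity does not hold cleanly enough for the constants to match, the direct two-distribution argument becomes the primary path.
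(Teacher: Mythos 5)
Your identification of the key primitives is sound---the Pauli/Fourier decomposition of $U$, the notion of influence $\Inf_i[U] = \sum_{P:P_i\neq I}|\widehat U(P)|^2$, Pauli sampling via the Choi--Jamio\l kowski state, and the quantitative far-from-junta guarantee $\Inf_{\overline{S}}[U]\geq \Omega(\eps^2)$ all appear in the paper (the last is Wang's lemma). The reduction you propose for the lower bound, encoding $f$ as $U_f=\mathrm{diag}((-1)^{f(x)})$, is also exactly the paper's. However, both halves of your argument have a genuine gap.

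For the upper bound, the ``find relevant qubits one at a time via amplitude amplification'' strategy does not give $\widetilde{O}(\sqrt{k})$; it gives $\widetilde{O}(k)$. Concretely, after you have collected a set $R$ of discovered qubits, the outside weight $w_R(U)$ is only guaranteed to be $\Omega(\eps^2)$ while $|R|\leq k$, so each of the $k+1$ rounds can cost $\Theta(1/\eps)$ amplified samples; even in a favorable instance (say $2k$ relevant qubits with influence $\eps^2/k$ each), the round-by-round cost $\sum_{j=0}^{k} \widetilde{O}(1/\sqrt{w_{R_j}})$ sums to $\widetilde{\Theta}(k/\eps)$, not $\widetilde{O}(\sqrt{k})$. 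The Ambainis et al.\ speedup for Boolean juntas, which you cite as the analogy, does not iterate over single variables: it reduces to \emph{Gapped Group Testing}, i.e.\ the task of distinguishing ``at most $k$ marked'' from ``at least $k+d$ marked'' given an $\mathrm{Intersects}_A(S)$ oracle, which costs $O(\sqrt{k/d})$ group queries by an adversary-bound construction. The $\sqrt{k}$ arises from coupling this $\sqrt{k/d}$ with a threshold $\delta_l\sim d\,\eps^2/(k\log k)$ at which the influence oracle is instantiated, so that the two $\sqrt{d}$ factors cancel; the paper's algorithm runs $\Theta(\log k)$ such GGT instances over a dyadic sweep of $d$, plus a random-subset Chebyshev estimate to cover the case where the influence is concentrated below all thresholds. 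Your proposal is missing this group-testing reduction entirely, and no amount of wrapping exponential search or fixed-point amplification around the one-at-a-time loop will recover the $\sqrt{k}$.

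For the lower bound, your reduction correctly handles the YES case and shows $\dist(U_f,U_g)\geq\sqrt{2\eps}$ for Boolean $k$-juntas $g$ (up to the phase issue, which is handled because diagonal real unitaries only admit $\theta\in\{0,\pi\}$ and $k$-juntas are closed under negation). But the soundness direction of the reduction requires that $U_f$ be far from \emph{every} quantum $k$-junta, including non-diagonal ones such as $I^{\otimes(n-k)}\otimes W$ for arbitrary $W\in\calU_{2^k}$, and your argument says nothing about these. This is exactly where the paper's structural lemma (\Cref{lemma:lb-structural-result}) enters: for any quantum $k$-junta $V$ there is a Boolean $k$-junta $g$ with $\dist(V,U_g)\leq \dist(V,U_f)$, proved by rounding the diagonal of $V$ blockwise to $\pm1$ phases and observing that off-diagonal contributions to $\|U_f-V\|^2$ and $\|U_g-V\|^2$ coincide. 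Combined with the triangle inequality, this yields $\dist(U_f,V)\geq\frac{1}{2}\sqrt{2\eps}$. You flag normalization as the subtlety, but the real missing idea is this rounding/triangle-inequality step; without it the reduction only rules out diagonal impostors. Your ``fallback'' two-distribution argument is plausible in spirit but is not carried out, and in the paper the reduction route is the primary (and only) one.
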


Our second main result is an algorithm to \emph{learn} quantum $k$-juntas with significantly lower sample complexity than the naive QPT approach; in particular there is \emph{no} dependence on the total number of qubits, $n$. 

\begin{theorem}[Informal version of \Cref{thm:learning,thm:learning-lb}]
	Given oracle access to a quantum $k$-junta $U$ acting on $n$-qubits and $\eps > 0$, there exists an algorithm that makes $O(4^k/\eps^2)$ queries to $U$ and outputs with probability $9/10$ a unitary $\wh{U}$ such that $\dist(U,\wh{U})\leq \eps$. Furthermore, $\Omega(4^k/k)$ queries are necessary to learn quantum juntas.
\end{theorem}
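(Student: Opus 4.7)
The plan is a two-stage algorithm that reduces learning a quantum $k$-junta on $n$ qubits to learning an arbitrary $k$-qubit unitary. In the first stage, I would use a ``find-relevant-variables'' subroutine like \Wang{} (itself built on top of an influence estimator such as \QIT{}) to identify, with high probability, the set $J \subseteq [n]$ of qubits on which $U$ acts non-trivially. Since $U$ is promised to be a $k$-junta, $|J| \leq k$, and I would expect this stage to cost only $\mathrm{poly}(k, \log n, 1/\eps)$ queries, a cost that is absorbed into the $O(4^k/\eps^2)$ budget. In the second stage, after pinning down $J$, the unitary factors as $U = V \otimes I_{[n]\setminus J}$ for some $k$-qubit $V$, and I would invoke standard quantum process tomography (e.g., Choi-state tomography using a $k$-qubit ancilla and a maximally entangled state) to learn $V$ to diamond-distance error $\eps$ using $O(4^k/\eps^2)$ queries to $U$. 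Outputting $\widehat{U} := \widehat{V} \otimes I_{[n]\setminus J}$ then yields $\dist(U, \widehat{U}) \leq \eps$ by a simple triangle inequality on the restricted and identity blocks.

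\textbf{Lower bound.} For the lower bound, I would reduce from learning an arbitrary $k$-qubit unitary: given such a $V$, the embedding $U_V := V \otimes I_{n-k}$ is a quantum $k$-junta, so any learner for quantum $k$-juntas yields a learner for $k$-qubit unitaries with the same query complexity. It therefore suffices to prove an $\Omega(4^k/k)$ lower bound for learning $k$-qubit unitaries to constant error. Since $U(2^k)$ has real dimension $4^k$, one can construct a packing of $N = 2^{\Omega(4^k)}$ pairwise $\Omega(1)$-separated (in diamond distance) unitaries. A Fano-style information-theoretic argument then combines this packing with the observation that each (adaptive) query to the oracles for $U$ and $U^\dagger$ can convey at most $O(k)$ bits of information about the unknown unitary (via Holevo's bound, since only $k$ qubits are ever touched by the oracle). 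The requirement $\log N = \Omega(4^k)$ bits of information therefore forces $\Omega(4^k/k)$ queries.

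\textbf{Main obstacle.} I expect the lower bound to be the harder half. The delicate points are: handling fully adaptive queries (including $U^\dagger$ queries, and arbitrary pre- and post-processing with ancillas) when bounding the per-query information gain, and calibrating the packing carefully enough to extract exactly an $\Omega(4^k/k)$ bound rather than the weaker $\Omega(4^k/\mathrm{poly}(k))$ bound that a loose argument might give. On the algorithmic side, a secondary subtlety is arguing that approximate (rather than exact) influence estimates from the first stage suffice to correctly identify the junta's support $J$ with high probability, so that the second-stage tomography is applied to the correct $k$-qubit subsystem.
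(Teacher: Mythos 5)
Your proposal matches the paper's two-stage structure for the upper bound almost exactly: find the influential qubits via Pauli/Fourier-style sampling, then tomography on the reduced Choi state. One small but important calibration: you cannot literally ``pin down the set $J$ of qubits on which $U$ acts non-trivially'' --- qubits with tiny but nonzero influence are invisible to any bounded-query algorithm. The paper's \PauliSample\ only finds qubits with influence at least $\eps^2/(4k)$, and then the analysis (\Cref{lem:closeness-of-phi-U}) shows that the qubits it misses contribute at most $\eps^2/4$ in total, so the post-selected Choi state is still $\eps/2$-close to the true one. You flag this as a ``secondary subtlety,'' but it is actually a load-bearing step: without the quantitative bound on the tail influence, the ``$U = V \otimes I$'' factorization you then hand to tomography is only approximate, and one must track that error through the tomography guarantee. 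The paper's device for this is to prepare $\ket{v(U)}$, measure the low-influence registers in the Bell/Pauli basis, and \emph{post-select} on seeing $I$ there; this both certifies and produces the reduced pure state to feed to \Tomography.

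For the lower bound, your Fano-plus-Holevo sketch is morally the same argument the paper makes, but the paper packages it as a reduction to Nayak's lower bound for the \textsc{Input Guessing} game, and this is precisely what dissolves the obstacle you (correctly) identify as the hard part. Rather than trying to bound a per-query mutual-information increment under adaptive queries to both $U$ and $U^\dagger$ with arbitrary ancillas and pre/post-processing --- which is delicate, since conditioning on earlier transcripts can in principle interact badly with Holevo-style bounds --- the paper simply observes that each oracle call can be simulated by Bob shipping the $k$ relevant qubits to Alice (who holds $U_x$) and back, costing $2k$ qubits of quantum communication per query. Nayak's theorem then bounds the total communication needed to recover $x$ from a packing of size $K = \Omega((1/\eps)^{4^k})$ by $\Omega(\log K)$ in one shot, with adaptivity, $U^\dagger$ queries, and ancillas all handled automatically by the communication model. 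Your approach would work but requires rebuilding exactly this machinery; the communication-complexity reduction is the cleaner route, and it is what yields the sharp $\Omega(4^k \log(1/\eps)/k)$ rather than a lossier polynomial in $k$.
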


Both our upper bounds (for testing and for learning) are proved via Fourier-analytic techniques and crucially make use of the notion of \emph{influence of qubits on a unitary}, first introduced by Montanaro~and~Osborne~\cite{Montanaro2010} in the context of Hermitian unitary matrices. Our lower bound for testing quantum $k$-juntas appeals to the lower bound for testing Boolean $k$-juntas obtained by Bun, Kothari, and Thaler~\cite{Bun2017}, as well as a new structural result for quantum $k$-juntas. Our lower bound for learning quantum $k$-juntas arises from the communication complexity of the \textsc{Input Guessing} game \cite{nayak1999optimal}.

\paragraph{Organization.} We briefly recall related work on testing both Boolean and quantum juntas in \Cref{subsec:related-work}, and then give a high-level technical overview of our results in \Cref{subsec:our-stuff}. We prove our $\wt{O}(\sqrt{k})$ upper bound for testing quantum $k$-juntas in \Cref{sec:testing-ub}, and prove our $\Omega(\sqrt{k})$ lower bound for the same in \Cref{sec:lower-bound}. Finally, we present our upper and lower bound on learning quantum $k$-juntas in \Cref{sec:learning}.             

\subsection{Related Work}
\label{subsec:related-work}
We summarize related work as well as our contributions in \Cref{table:prior-work}.

{\renewcommand{\arraystretch}{1.5}
\begin{table}[]
\centering
\begin{tabular}[]{@{}llll@{}}
\toprule
                  & Classical Testing                                                  & Quantum Testing & Quantum Learning \\ \midrule
$f\isazofunc$ & \begin{tabular}[t]{@{}l@{}}${O}(k\log k)$ \cite{Blais2009} \\ $\Omega(k\log k)$ \cite{Saglam2018}\end{tabular} &   \begin{tabular}[t]{@{}l@{}}$\wt{O}(\sqrt{k})$ \cite{Ambainis2016} \\ $\Omega(\sqrt{k})$ \cite{Bun2017} \end{tabular}              &
\begin{tabular}[t]{@{}l@{}}$O(2^k)$ \cite{Atici2007} \\ $\Omega(2^k)$ \cite{Atici2007} \end{tabular}                  \\ 
Unitary $U \in \calM_{2^n\times2^n}$ &                                                                    --- &    \begin{tabular}[t]{@{}l@{}}$\wt{O}(\sqrt{k})$ (\Cref{thm:main-tester}) \\ $\Omega(\sqrt{k})$ (\Cref{thm:reduction}) \end{tabular}             &
\begin{tabular}[t]{@{}l@{}}$O(4^k)$ (\Cref{thm:learning})  \\ $\Omega(\frac{4^k}{k})$ (\Cref{thm:learning-lb}) \end{tabular}             \\ \bottomrule
\end{tabular}

\caption{Our contributions and prior work on testing and learning Boolean and quantum $k$-juntas.} \label{table:prior-work} 
\end{table}
}

\paragraph{Classical Testing of Boolean Juntas.} We first note that \Cref{question:qjt} captures as a special case its Boolean analog, which we state below as \Cref{question:bjt}. Recall that a Boolean function $f\isazofunc$ is a $k$-junta if $f(x) = g(x_{i_1}, \ldots, x_{i_k})$ for some $g:\zo^k \to \zo$. We also say that for $f, g\isazofunc$, 
\[\dist(f,g) := \Pr[f(\bx)\neq g(\bx)]\] for $\bx\sim\zo^n$ 
drawn uniformly at random. (In other words, the distance metric we use for Boolean functions is simply the normalized Hamming distance.)

\begin{problem}[Testing Boolean $k$-juntas]
\label{question:bjt}
	Given classical or quantum query access to a function $f\isazofunc$ via a unitary $\calO_f$, decide with constant probability if $f$ is a $k$-junta or if $\dist(f,g)\geq \eps$ for every $k$-junta $g\isazofunc$. 
\end{problem} 

This question has been extensively studied over recent decades, with the first result explicitly related to testing juntas obtained by Parnas, Ron, and Samorodnitsky \cite{Parnas2002} who gave a \emph{classical} algorithm for testing $1$-juntas with $O(1)$ queries. Soon afterwards, Fischer et al. \cite{Fischer2004} introduced classical algorithms for testing $k$-juntas with $\wt{O}(k^2)$ queries. The query complexity of classically testing juntas was later improved by Blais \cite{Blais2009} who gave a nearly optimal tester which makes $\wt{O}(k)$ queries. Blais's tester is asymptotically optimal up to a logarithmic factor, given the $\Omega(k)$ lower bound for classically testing $k$-juntas by \cite{Chockler2004}.

\paragraph{Quantum Testing and Learning of Boolean Juntas.} There has also been a long line of work on testing Boolean juntas via \emph{quantum} algorithms, i.e. algorithms with query access to a unitary $\calO_f$ representing a function $f\isazofunc$, allowing the algorithm to query superpositions of inputs. At\i c\i\ and Servedio \cite{Atici2007} gave an elegant quantum algorithm to test $k$-juntas using $O(k)$ queries.\footnote{At the time \cite{Atici2007} was written, the best classical upper bound on testing juntas was $\wt{O}(k^2)$.} More recently, Ambainis et al. \cite{Ambainis2016} came up with a quantum algorithm to test juntas that makes only $\wt{O}(\sqrt{k})$ queries. This was shown to be essentially optimal by Bun, Kothari, and Thaler~\cite{Bun2017} who proved an $\wt{\Omega}(\sqrt{k})$ lower bound for via a reduction from the image size testing problem. Finally, At\i c\i\ and Servedio \cite{Atici2007} also gave a $O(2^k)$-sample quantum algorithm for \emph{learning} Boolean $k$-juntas in the PAC model. 

\paragraph{Quantum Testing of Quantum Juntas.} Returning to \Cref{question:qjt}, Wang \cite{Wang2011} gave a tester for testing whether a unitary operator $U$ is a $k$-junta or is $\eps$-far from a $k$-junta that makes $O(k)$ queries, and their algorithm turns out to be a direct generalization of the tester of At\i c\i\ and Servedio \cite{Atici2007}.\footnote{The result originally obtained by Wang had a worse bound of $O(k\log k)$, but this can be improved to $O(k)$ by following the analysis of \cite{Atici2007} (cf. Section 5.1.6 of \cite{Montanaro2016}). We also note that their query complexity's dependence on $\eps$ can be improved via a straightforward application of amplitude amplifiaction.} Finally, Montanaro and Osborne \cite{Montanaro2010} had previously studied a different tester for the property of being a ``dictatorship,'' i.e. a $1$-junta, but did not prove correctness. 

\subsection{Our Techniques}
\label{subsec:our-stuff}

In this section, we give a high-level technical overview of our main results. 

\subsubsection{Testing Quantum Juntas} 

Our $\wt{O}(\sqrt{k})$-query tester for quantum $k$-juntas can be viewed as direct analog of the $\wt{O}(\sqrt{k})$-query tester for Boolean $k$-juntas obtained by Ambainis, et al. \cite{Ambainis2016}. Our tester relies crucially on the notion of \emph{influence of qubits on a unitary}, which was first introduced by Montanaro~and~Osborne~\cite{Montanaro2010} for Hermitian unitaries. Informally, the influence of a qubit on a unitary $U$ captures how non-trivially $U$ acts on that qubit; see \Cref{subsec:prelims-influence} for a formal definition as well as useful properties of this notion of influence. Our main technical contributions here are an alternate formulation of the influence of a qubit on a unitary and a subroutine \QIT\ (cf. \Cref{subsec:testable-influence}) to estimate this influence. With this in hand, we closely mirror the approach of Ambainis et al. \cite{Ambainis2016} in \Cref{subsec:ambainis}. We essentially used their algorithm as a black-box, but our analysis differs in certain parameters; for completeness, we present the entire analysis with these modifications.

The $\Omega(\sqrt{k})$-query lower bound for testing quantum $k$-juntas relies on the $\Omega(\sqrt{k})$-query lower bound for testing {Boolean} $k$-juntas obtained by Bun, Kothari, and Thaler~\cite{Bun2017}. We do so via the natural encoding of a Boolean function $f\isazofunc$ as a unitary $U_f$ given by 
\begin{equation}
    \label{eq:encoding}
    U_f := \mathrm{diag}((-1)^{f(x)}).
\end{equation}
It is immediate from \Cref{eq:encoding} that encoding a Boolean $k$-junta in this way yields a quantum $k$-junta. Our main structural result \Cref{prop:lb-no-case}, shows that if a Boolean function $f\isazofunc$ is far from any Boolean $k$-junta, then $U_f$ is also far from any quantum $k$-junta. We start by first showing that $U_f$ is far from $U_g$ for every Boolean $k$-junta $g\isazofunc$, and then handling  arbitrary quantum $k$-juntas via \Cref{lemma:lb-structural-result}.

\subsubsection{Learning Quantum Juntas} 

Our learning algorithm, \Cref{alg:q-learner}, can be viewed as analogous to the algorithm obtained by At\i c\i~and~Servedio \cite{Atici2007} for Boolean $k$-juntas. We make use of the \emph{Choi-Jamio\l kowski (CJ) isomorphism} between unitary operators on $n$ qubits and pure states in $\C^{2^n}\times\C^{2^n}$, allowing us to techniques used to learn \emph{quantum states} to learn the unitary $U$. We start by first determining the high-influence qubits of the quantum $k$-junta $U$ via ``Pauli sampling,'' which is analogous to the Fourier Sampling subroutine used by \cite{Atici2007}. We then take the CJ isomorphism of $U$ and appropriately trace out the qubits with negligible influence, allowing us to use a \emph{pure state tomography} procedure to learn the reduced CJ state using $O(4^n/\eps^2)$ samples.

Our lower bound for learning quantum juntas is proved via a reduction to a communication complexity lower bound, namely a quantum lower bound proved by Nayak~\cite{nayak1999optimal} on the communication required for one party to guess the input of another party.

\subsection{Future Work}
\label{subsec:future}

A natural next direction is to consider the testability/learnability of \emph{quantum channels} acting non-trivially on $k$-qubits. Recall that a quantum channel is a completely positive, trace-preserving linear map; see \cite{Watrous2018} for a comprehensive introduction to the subject. As noted in \cite{Montanaro2016}, there has not been much work on testing properties of quantum channels.

We also remark that (to our knowledge) there has been no work on \emph{tolerant} property testing---for both Boolean functions as well as unitary matrices---via quantum algorithms.\footnote{Recall that in the tolerant model, the tester is asked to distinguish instances that are $\eps_1$-close to the property from instances that are $\eps_2$-far from the property.} The best known classical upper bound for tolerant testing of Boolean $k$-juntas is $2^{\wt{O}(\sqrt{k})}$ due to Iyer, Tal, and Whitmeyer~\cite{Iyer2021}. We also note that a $\Omega(2^{\sqrt{k}})$ lower-bound against classical non-adaptive algorithms for tolerant junta testing was obtained by Pallavoor~et al.~\cite{PRW-mono}. 

Finally, it is unknown whether quantum algorithms offer any advantage in terms of query complexity for the problem of testing Boolean $k$-juntas in the \emph{distribution-free} setting.\footnote{In the \emph{distribution-free} model, the distance between two functions is measured with respect to a fixed but unknown distribution.} In particular, Belovs~\cite{Belovs2019} gave a $O(k)$ quantum tester for Boolean $k$-juntas in the distribution-free model, matching the query complexity of the best classical algorithms for testing Boolean $k$-juntas in the distribution-free model due to Bshouty~\cite{Bshouty2019} and Zhang~\cite{Zhang2019}.

\section{Preliminaries}
\label{sec:prelims}

In this section, we introduce notation and recall useful background. We assume familiarity with elementary quantum computing and quantum information theory, and refer the interested reader to \cite{Nielsen2010,Wilde2017} for more background. For $n \geq 1$, we will write $N = 2^n$. Given $T\sse[n]$, we will write $\overline{T} := [n] \setminus T$. We will write $I_n$ to denote the $n\times n$ identity matrix; when $n$ is clear from context, we may write $I$ instead. 

\subsection{Unitary Operators}
\label{subsec:unitary}

We will write $\calM_{N,N}$ to denote the set of linear operators from $\C^N$ to $\C^N$ and denote by $\calU_{N}$ the set of $N$-dimensional unitary operators, i.e. 
\[\calU_{N} := \cbra{ U \in \calM_{N,N} : UU^\dagger = U^\dagger U = I}.\]

\begin{definition} \label{def:partial-trace}
Given a unitary $U\in\calU_N$ and $S\sse[n]$, we define the operator $\Tr_S(U)$ obtained by \emph{tracing out $S$} to be 
\[\Tr_S(U) = \sum_{k \in \{0, 1\}^S} (I_{\overline{S}} \otimes \bra{k}) U (I_{\overline{S}} \otimes \ket{k}).\]
\end{definition}

In the above definition, we write $\ket{k}$ for $k\in \zo^S$ to be the $|S|$ qubit state in the computational basis corresponding to the bit-string $k$. Note that \Cref{def:partial-trace} aligns with the fact that the trace of a unitary matrix $U$ is given by 
\[\Tr(U) = \sum_{k \in \{0, 1\}^n} \bra{k} U  \ket{k}.\]

\begin{definition}[$k$-Junta] \label{def:quantum-junta}
	We say that a unitary $U\in\calU_N$ is a \emph{quantum $k$-junta} if there exists $S \sse [n]$ with $|S| = k$ such that 
	\[U = V_S \otimes I_{{\overline{S}}}\]
	for some $V_S \in \calU_{2^{k}}$.
\end{definition}

In contrast, \emph{classical} $k$-juntas are Boolean functions $f\isazofunc$ that depend on only $k$ of their $n$ input variables. More formally, a function $f\isazofunc$ is a $k$-junta if there exists $g:\zo^k\to\zo$ such that $f(x_1, \ldots, x_n) = g(x_{i_1}, \ldots, x_{i_k})$ for some fixed $i_1, \ldots, i_n \in [n]$ and for all $x\in\zo^n$.

We will view $\calM_{N,N}$ as an inner-product space equipped with the Hilbert--Schmidt inner product 
\[\abra{A,B} := \Tr\pbra{A^\dagger B}.\]
Recall that the Hilbert--Schmidt inner product induces the Hilbert--Schmidt (or Frobenius) norm, which is given by 
\[\|A\|^2 := \Tr\pbra{A^\dagger A} = \sum_{i,j = 0}^{N - 1} |A[i,j]|^2.\]

We will use the following metric to compare the distance between unitary matrices. {Note that this metric is \emph{not} the natural metric induced by the Hilbert--Schmidt norm. }
\begin{definition}[Distance between unitaries] \label{def:distance}
	Given $A, B \in \calM_{N,N}$, we define 
	\[\dist(A,B) := \min_{\theta\in[0,2\pi)} \frac{1}{\sqrt{2N}} \|e^{i\theta}A - B\|.\]
	We say that $A$ is \emph{$\eps$-far} from $B$ if $\dist(A,B) \geq \eps$. More generally, for any $\calP \subseteq \calM_{N,N}$ and $A\in\calM_{N,N}$, we write 
	\[\dist(A,\calP) := \min_{B \in \calP} \dist(A,B)\]
	and similarly say that $A$ is \emph{$\eps$-far} from $\calP$ if $\dist(A,\calP)\geq\eps$.
\end{definition}

It can easily be checked that $\dist(A,B) \geq 0$, with equality holding if and only if $A = e^{i\theta}B$ for some $\theta\in[0,2\pi)$, as well as other standard properties of a metric. Finally, note that $\dist(V_1 \otimes U, V_2 \otimes U) = \dist(V_1, V_2)$ for unitaries $U, V_1, V_2$.

\subsection{The Pauli Decomposition}
\label{subsec:pauli}

In this section, we introduce a useful orthonormal basis for $\calM_{N,N}$ (viewed as a $\C$-vector space) which will be central to what follows. Recall that the set of Pauli operators given by 
\[
\sigma_0 = \begin{pmatrix}
	1 & 0 \\ 0 & 1 
\end{pmatrix} = I,\quad
\sigma_1 = \begin{pmatrix}
	0 & 1 \\ 1 & 0 
\end{pmatrix} = X,\quad
\sigma_2 = \begin{pmatrix}
	0 & -i \\ i & 0
\end{pmatrix} = Y,\quad\text{and}\quad
\sigma_3 = \begin{pmatrix}
	1 & 0 \\ 0 & -1 
\end{pmatrix} = Z
\]
forms an orthonormal basis for $\calM_{2,2}$ with respect to the Hilbert--Schmidt inner product. For $x \in \cbra{0,1,2,3}^n \cong \Z_4^n$, we define $\sigma_x := \sigma_{x_1}\otimes\cdots\otimes\sigma_{x_n}$ and write $\supp(x) := \{i\in[n] : x_i \neq 0\}$. It is then easy to check that the collection 
\[\cbra{\frac{1}{\sqrt{N}} \sigma_x }_{x\in\Z_4^n}\] forms an orthonormal basis for $\calM_{N,N}$ with respect to the Hilbert--Schmidt inner product. We will frequently refer to this basis as the \emph{Pauli basis} for $\calM_{N,N}$. It follows that we can write any $A\in\calM_{N,N}$ as 
\[A = \sum_{x \in \Z_4^n} \wh{A}(x)\sigma_x \qquad\text{where}\qquad \wh{A}(x) := \frac{1}{N}\abra{A,\sigma_x}.\]
We will sometimes refer to $\wh{A}(x)$ as the \emph{Pauli coefficient of $A$ on $x$} and will refer to the collection $\{\wh{A}(x)\}_x$ as the \emph{Pauli spectrum of $A$.} It is easy to verify that Parseval's and Plancharel's formulas hold in this setting: 
\[\frac{1}{N} \|A\|^2 = \sum_{x\in\Z_4^n} |\wh{A}(x)|^2 \qquad\text{and}\qquad \frac{1}{N}\abra{A,B} = \sum_{x\in\Z_4^n} {\wh{A}(x)}^\ast\cdot\wh{B}(x).\]
In particular, for $U\in\calU_N$, we have $\sum_{x\in\Z_4^n} |\wh{U}(x)|^2 = 1$. 

\subsection{Influence of Qubits on Unitaries}
\label{subsec:prelims-influence}

\cite{Montanaro2010} introduced a notion of \emph{influence of qubits on unitaries}, in the spirit of the well-studied classical notion of influence of variables on Boolean functions $f: \zo^n \to \zo$ (cf. Chapter~2~of~\cite{ODonnell2014a}). This notion of influence will be central to the testing algorithm presented in \Cref{sec:testing-ub}. Although \cite{Montanaro2010}'s notion of influence was developed only for Hermitian unitaries (i.e. ``Quantum Boolean Functions''), we first present their formulation as it gives good intuition for what influence captures, after which we introduce a more general definition of influence that applies to arbitrary unitaries as well as to more than one qubit.

\begin{definition}[Derivative operator] \label{def:derivative}
	The $i^\text{th}$ \emph{derivative operator} $\D_i$ is a superoperator on $\calM_{N,N}$ defined through its action on the Pauli basis element $\sigma_x,$ $x\in\Z_4^n$: 
	\[\D_i \sigma_x = \begin{cases}
 \sigma_x & x_i \neq 0\\ 0 & x_i = 0	
 \end{cases}.
\]
\end{definition}

It follows immediately that for $A\in\calM_{N,N}$, $A = \sum_{x \in \Z_4^n} \wh{A}(x) \sigma_x$, we have 
\begin{equation} \label{eq:fourier-def-derivative}
	\D_iA = \sum_{x : x_i \neq 0} \wh{A}(x)\sigma_x 
\end{equation}
Informally, $\D_i$ isolates the part of the Pauli spectrum that acts non-trivially on the $i^{\text{th}}$ qubit (i.e. the $x$ such that $\sigma_{x_i} \neq I$). 
We can now introduce the notion of {influence} of qubits on unitaries proposed by \cite{Montanaro2010}. 

\begin{definition}[Influence of single qubit] \label{def:influence}
	Given a unitary $U\in\calU_N$, the \emph{influence of the $i^\text{th}$ qubit on $U$}, written $\Inf_i[U]$, is 
	\[\Inf_i[U] := \|\D_iU\|^2.\]
\end{definition}

At a high level, the influence of the $i^\text{th}$ qubit on a unitary $U$ captures how non-trivially the unitary $U$ acts on the $i^\text{th}$ qubit of a quantum state. Note that it is immediate from \Cref{eq:fourier-def-derivative} that 
\[\Inf_i[U] = \sum_{x : x_i\neq 0} |\wh{U}(x)|^2.\]
This suggests a natural way to extend the \Cref{def:influence} to more than one qubit. 

\begin{definition}[Influence of multiple qubits]
\label{def:influence-multiple-qubits}
	Given a unitary $U\in\calU_N$ and $S\sse[n]$, the \emph{influence of $S$ on $U$}, written $\Inf_S[U]$, is 
	\begin{align} \label{eq:analytic-inf}
	\Inf_S[U] &= \sum_{x : \supp(x)\cap S\neq\emptyset} |\wh{U}(x)|^2.
	\end{align}
\end{definition}

The above definition is analogous to the ``Fourier formula'' for the the influence of a set of variables on a Boolean function (cf. Section~2.4 of~\cite{Ambainis2016}). 
Furthermore, as stated earlier, note that these definitions apply to arbitrary unitaries (i.e. we do not require them to be Hermitian). We present an alternative characterization of $\Inf_S[U]$ (which we will not require, but may be of independent interest) in \Cref{appendix:influence}. We have the following lemma.  

 


\begin{lemma} \label{lemma:inf-useful-props}
	For $S, T \sse [n]$ and a unitary $U \in \calU_N$, we have 
	\begin{enumerate}
		\item Monotonicity: If $S\sse T$, $\Inf_S[U] \leq \Inf_T[U]$; and 
		\item Subadditivity:  $\Inf_{S \cup T}[U] \leq \Inf_S[U] + \Inf_T[U]$.
	\end{enumerate}
\end{lemma}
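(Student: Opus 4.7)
The plan is to prove both properties directly from the Pauli-analytic formula \eqref{eq:analytic-inf}, reducing each claim to an elementary set-theoretic observation about the index sets of the sums involved. Since $|\wh U(x)|^2 \geq 0$ for every $x \in \Z_4^n$, both statements reduce to comparing which collections of Pauli coefficients are being summed.

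For monotonicity, I would observe that if $S \subseteq T$ and $x \in \Z_4^n$ satisfies $\supp(x) \cap S \neq \emptyset$, then certainly $\supp(x) \cap T \neq \emptyset$ as well. Hence
\[
\{x \in \Z_4^n : \supp(x) \cap S \neq \emptyset\} \;\subseteq\; \{x \in \Z_4^n : \supp(x) \cap T \neq \emptyset\},
\]
and summing a nonnegative quantity over a subset can only decrease it, so $\Inf_S[U] \leq \Inf_T[U]$.

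For subadditivity, the key identity is the set-theoretic equality
\[
\{x : \supp(x) \cap (S \cup T) \neq \emptyset\} \;=\; \{x : \supp(x) \cap S \neq \emptyset\} \;\cup\; \{x : \supp(x) \cap T \neq \emptyset\},
\]
which holds because $\supp(x) \cap (S \cup T) = (\supp(x) \cap S) \cup (\supp(x) \cap T)$. Summing $|\wh U(x)|^2$ over the left-hand side and bounding the sum over the union by the sum of the two individual sums (valid since the summand is nonnegative) gives
\[
\Inf_{S \cup T}[U] \;\leq\; \sum_{x : \supp(x) \cap S \neq \emptyset} |\wh U(x)|^2 \;+\; \sum_{x : \supp(x) \cap T \neq \emptyset} |\wh U(x)|^2 \;=\; \Inf_S[U] + \Inf_T[U].
\]

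Honestly, there is no real obstacle here: both claims are one-line consequences of the Fourier/Pauli formula in \Cref{def:influence-multiple-qubits}. The only thing worth flagging is that one must use the Pauli-analytic definition rather than any operator-theoretic characterization; once that is in hand, the argument is purely combinatorial and relies only on the nonnegativity of squared Pauli coefficients.
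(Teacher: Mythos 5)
Your proof is correct and takes essentially the same approach as the paper: both rely on the Pauli-analytic formula for influence, reduce monotonicity to containment of the index sets, and reduce subadditivity to the set identity $\supp(x) \cap (S \cup T) = (\supp(x)\cap S) \cup (\supp(x)\cap T)$ together with nonnegativity of $|\wh U(x)|^2$. The paper's proof is simply a one-line remark making the same observations, so there is nothing to flag.
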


Note that monotonicity is immediate from the analytic interpretation of influence (cf. \Cref{eq:analytic-inf}), and subadditivity follows from the fact that 
\[\{(S \cup T) \cap \supp(x)\} = \{S \cap \supp(x)\} \cup \{T \cap \supp(x)\}.\]

As mentioned before, Wang \cite{Wang2011} implicitly used this notion of influence to test quantum $k$-juntas. In particular, Wang proved the following.

\begin{lemma}[\cite{Wang2011}]\label{lemma:eps-far-guarantees}
Given a unitary $U\in\calU_N$, if $U$ is $\eps$-far from every quantum $k$-junta $V$, then for all $T\sse[n]$ with $|T|\leq k$, we have that 
\[\Inf_{\overline{T}}[U] \geq \frac{\eps^2}{4}.\]
\end{lemma}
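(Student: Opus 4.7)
The plan is to prove the contrapositive: assume some $T \subseteq [n]$ with $|T| \leq k$ satisfies $\Inf_{\overline{T}}[U] < \eps^2/4$, and exhibit a quantum $k$-junta $V$ acting on $T$ with $\dist(U, V) < \eps$. The natural candidate is obtained by projecting $U$ onto the span of Pauli operators supported inside $T$ and then rounding the resulting $2^k \times 2^k$ block to a unitary via the polar decomposition.

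Concretely, write $A := \sum_{x : \supp(x) \subseteq T} \wh{U}(x)\sigma_x$, which factors as $A = C_T \otimes I_{\overline{T}}$ for some $C_T \in \calM_{2^k, 2^k}$. By Parseval (applied to $U - A$), $\|C_T\|^2 = 2^k\pbra{1 - \Inf_{\overline{T}}[U]}$. Since $C_T$ is not necessarily unitary, take its singular value decomposition $C_T = W \Sigma X^\dagger$ with $\Sigma = \mathrm{diag}(s_1, \dots, s_{2^k})$, and set $V_T := W X^\dagger$; a short calculation gives $\|C_T - V_T\|^2 = \sum_i (1 - s_i)^2$.

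The key structural step is to show each $s_i \in [0,1]$. A direct computation using $\Tr(\sigma_z) = 2^{n-k} \cdot \mathbf{1}[z = 0^{\overline{T}}]$ yields the identity $C_T = 2^{-(n-k)} \Tr_{\overline{T}}(U)$, and for any unitary $U$ on $\mathcal{H}_T \otimes \mathcal{H}_{\overline{T}}$ we have $\|\Tr_{\overline{T}}(U)\|_{\mathrm{op}} \leq 2^{n-k}$ via the Cauchy--Schwarz bound $|\abra{w \otimes e_j,\, U(v \otimes e_j)}| \leq 1$ summed over $j \in \{0,1\}^{\overline{T}}$. Consequently $(1 - s_i)^2 \leq 1 - s_i^2$ for each $i$, so
\[\|C_T - V_T\|^2 \;\leq\; 2^k - \|C_T\|^2 \;=\; 2^k \cdot \Inf_{\overline{T}}[U].\]

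Finally, $V := V_T \otimes I_{\overline{T}}$ is a quantum $k$-junta, and its $n$-qubit Pauli coefficients satisfy $\wh{V}(y, z) = \wh{V_T}(y) \cdot \mathbf{1}[z = 0^{\overline{T}}]$, so Parseval gives
\[\|U - V\|^2 \;=\; N\pbra{\tfrac{1}{2^k}\|C_T - V_T\|^2 + \Inf_{\overline{T}}[U]} \;\leq\; 2N \cdot \Inf_{\overline{T}}[U].\]
Taking $\theta = 0$ in the definition of $\dist$ then yields $\dist(U, V) \leq \sqrt{\Inf_{\overline{T}}[U]} < \eps/2 < \eps$, contradicting the hypothesis. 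I expect the main obstacle to be the singular-value bound $s_i \leq 1$: the polar-rounding inequality $(1 - s_i)^2 \leq 1 - s_i^2$ is only true on the unit interval, so without the contraction property of partial traces of unitaries the control over $\|C_T - V_T\|$ collapses and the argument fails.
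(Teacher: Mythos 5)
The paper does not prove this lemma itself; it is cited directly to Wang~\cite{Wang2011} as an imported result, so there is no in-paper proof to compare against. Your proof, however, is correct and self-contained, and it mirrors the standard argument (going back to Montanaro--Osborne and Wang): truncate $U$ to the Pauli operators supported in $T$, write the truncation as $C_T \otimes I_{\overline{T}}$, and round $C_T$ to a unitary $V_T$ via polar decomposition. The step you correctly identify as the crux is the singular-value bound $s_i \leq 1$, which follows from $C_T = 2^{-(n-|T|)}\Tr_{\overline{T}}(U)$ together with the operator-norm contraction $\|\Tr_{\overline{T}}(U)\|_{\mathrm{op}} \leq 2^{n-|T|}$; this is what lets $(1-s_i)^2 \leq 1 - s_i^2$ control $\|C_T - V_T\|^2$ by $2^{|T|}\Inf_{\overline{T}}[U]$. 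Combined with the Pauli-mass computation $\tfrac{1}{N}\|U - V\|^2 = 2^{-|T|}\|C_T - V_T\|^2 + \Inf_{\overline{T}}[U]$, you in fact get the stronger conclusion $\dist(U,V) \leq \sqrt{\Inf_{\overline{T}}[U]}$, so the $\eps^2/4$ in the statement has slack (your argument would support $\eps^2$). One cosmetic point: when $|T| < k$, all occurrences of $2^k$ and $n-k$ in your calculation should read $2^{|T|}$ and $n - |T|$; this does not affect the argument since a $|T|$-junta is a fortiori a $k$-junta.
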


\subsection{Query Complexity of the Composition of Quantum Algorithms}\label{sec:irrel-var}

Our upper bound for quantum junta testing will compose quantum algorithms, each having some probability of error. We will invoke the following lemma about the query complexity of composed algorithms.

\begin{lemma} [\cite{Ambainis2016}, Corollary 2.12]\label{lemma:fn-composition}
With $D \subset \{ 0, 1\}^n$, let $F : D \rightarrow \{ 0, 1\}$ and $G_j$ be partial Boolean functions $ \forall j \in [n]$. Let $Q(F)$ denote the bounded-error quantum query complexity of $F$. Let $T$ equal to the objective value of a feasible solution $(X_j)$ to the adversarial bound in (2.3) of \cite{Ambainis2016}. We let an input variable $j$ be irrelevant for input $z \in D$ if and only if $X_j[z,z] = 0$. Then, we have
\begin{align*}
Q(F \circ (G_1, ... G_n)) = O\left(T \max_{j \in [n]} Q(G_j)\right).
\end{align*}
with the function composition done as in Definition 2.10 of \cite{Ambainis2016}.

\end{lemma}

First, the result says that we can compose functions without any overhead. Ordinarily, if we compose two algorithms that have some probability of error, then we may need to amplify the success probability of each subroutine, inducing a logarithmic overhead in query complexity.

A second important aspect of the result is that functions can be composed in a way where the top-level function, $F$, ignores ``irrelevant" inputs. In the context of our setting, $F$ is a Group-Testing algorithm, while each $G_j$ is an instance of an influence tester, applied on a particular subset of qubits. The formalism of \textit{irrelevant variables} allows the group tester to ignore certain instances of influence testers whose behavior is unpredictable. The details of our approach is identical to that of \cite{Ambainis2016}, to which we defer the technical details.

\subsection{The Choi-Jamiolkowski Isomorphism}\label{sec:cj}

In our algorithms, we will encode a unitary as a quantum state using the \emph{Choi-Jamio\l kowski isomorphism} \cite{Choi1975,Jamiolkowski1972}, which is a mapping between $N \times N$ unitary operators and pure states in $\C^N \otimes \C^N$. Concretely, this mapping associates to every unitary $U \in \calU_N$ the \emph{Choi-Jamio\l kowski state} (which we abbreviate as \emph{CJ state}):
  \[\ket{v(U)} := (U \otimes I) \pbra{\frac{1}{\sqrt{N}} \sum_{0 \leq i < N} \ket{i}\ket{i}}  = \frac{1}{\sqrt{N}} \sum_{0 \leq i,j < N} U[i,j] \, \ket{i} \ket{j}.\]
The CJ state $\ket{v(U)}$ can be prepared by first creating the maximally entangled state of dimension $N$, and then querying $U$ on half of the maximally entangled state. Since $N = 2^n$, this is equivalent to preparing $n$ EPR pairs (which altogether forms $2n$ qubits) and applying the unitary $U$ to the $n$ qubits coming from the first half of each of the EPR pairs. As such, each qubit of the unitary $U$ corresponds to two qubits of the state $\ket{v(U)}$. We will refer to qubits in $\{ 1, \ldots, n\}$ as the ones acted on by the unitary $U$, and qubits in $\{n + 1,\ldots,\ 2n\}$ as the ones acted on by $I$. We introduce the following notation for convenience.

\begin{notation}
For each qubit $\ell\in [n]$ acted on by the unitary $U$, there is a pair of corresponding qubits $(\ell, \wt{\ell})\in [n] \times \{n+1,\ldots,2n\}$ in the state $\ket{v(U)}$. In particular, $\wt{\ell}$ and $\ell$ are related as they formed an EPR pair at the synthesis of the CJ state.
\end{notation}

\section{Testing Quantum $k$-Juntas with $\wt{O}(\sqrt{k})$ Queries}
\label{sec:testing-ub}

As suggested by \Cref{lemma:inf-useful-props}, the notion of influence for unitaries behaves analogously to the ``usual'' notion of influence for Boolean functions, which was crucial to the $\wt{O}(\sqrt{k})$-query $k$-junta tester for Boolean functions obtained by Ambainis et al. \cite{Ambainis2016}. This motivates an analog of the algorithm obtained by Ambainis et al. for quantum juntas, and this is indeed how we obtain a $\wt{O}(\sqrt{k})$-tester for quantum $k$-juntas. In \Cref{subsec:testable-influence}, we present an unbiased estimator for the influence of qubits on a unitary, which we then combine with Ambainis et al.'s tester in \Cref{subsec:ambainis} to obtain our quantum $k$-junta tester. 

\subsection{An Influence Tester for Unitaries}
\label{subsec:testable-influence}



We start by describing a subroutine \RIT (cf. \Cref{alg:raw-IT}) that allows us to estimate the influence of a set of variables $S\sse[n]$ on a unitary $U$. 







\begin{algorithm} 
\caption{Influence Estimator for Quantum Unitaries} 
\label{alg:raw-IT}
\vspace{0.5em}

\textbf{Input:} Oracle access to $U\in \calU_N$, $S \subset [n]$ \\[0.5em]
\textbf{Output:} $X \in \{ 0, 1\}$ 

\

\RIT$\pbra{U,S}$:

\begin{enumerate}
	\item Prepare the Choi-Jamiolkowski state $\ket{v(U)}$ given by 
	\[\ket{v(U)} = \frac{1}{\sqrt{N}}\sum_{0\leq i,j < N} U[i,j]\ket{i}\ket{j}.\]
	This is prepared by querying $U$ once on the maximally entangled state. 
	\item Measure the $2|S|$ qubits in the registers $S \cup \{\wt{\ell} : \ell\in S\}$ in the Bell basis, $\{ \ket{v(\sigma_x)}\}_{x \in \Z_4 ^{n}}$, \\  and let $\ket{\phi}$ denote the post-measurement state. 
    \begin{enumerate}
        \item Test if $\ket{\phi}$ is equal to $\ket{\mathrm{EPR}}^{\otimes |S|}$, return 0.
        \item Otherwise, return $1$.
    \end{enumerate}
\end{enumerate}

\end{algorithm}

\begin{lemma}\label{lemma:raw-tester-expectation}
Let $X$ denote the output of \RIT$(U, S)$\ for $U\in\calU_N$ and $S\sse[n]$ as described in \Cref{alg:raw-IT}. Then 
\[\E[X] = \Inf_S[U].\]
\end{lemma}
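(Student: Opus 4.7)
The plan is to directly compute $\Pr[X = 1]$ by tracking the Pauli decomposition of $U$ through the Choi--Jamio\l kowski state and through the partial Bell-basis measurement. Since $X \in \{0,1\}$, we have $\E[X] = \Pr[X=1]$, so it suffices to show $\Pr[X=1] = \Inf_S[U]$.

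First, I would expand $U$ in the Pauli basis, $U = \sum_{x \in \Z_4^n} \wh{U}(x)\, \sigma_x$, and note that by linearity of the CJ map,
\[
\ket{v(U)} \;=\; \sum_{x \in \Z_4^n} \wh{U}(x)\, \ket{v(\sigma_x)}.
\]
Since $\{\ket{v(\sigma_x)}\}_{x \in \Z_4^n}$ is an orthonormal basis for $\C^N \otimes \C^N$ (this is the generalized Bell basis on $n$ EPR pairs), and since $\sigma_x$ factors as $\sigma_{x|_S} \otimes \sigma_{x|_{\overline{S}}}$ under the reordering that groups each qubit $\ell$ with its partner $\wt{\ell}$, the state $\ket{v(\sigma_x)}$ factors as a tensor product of a Bell-basis element on the $|S|$ pairs indexed by $S$ and a Bell-basis element on the remaining $|\overline{S}|$ pairs.

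Next, I would compute the outcome distribution of the partial Bell-basis measurement on the $2|S|$ qubits in $S \cup \{\wt{\ell} : \ell \in S\}$. Writing each $x \in \Z_4^n$ as $(y,z)$ with $y = x|_S \in \Z_4^S$ and $z = x|_{\overline{S}} \in \Z_4^{\overline{S}}$, the probability of observing outcome $y$ is
\[
\Pr[\text{outcome } y] \;=\; \bigg\| \sum_{z \in \Z_4^{\overline{S}}} \wh{U}(y,z)\, \ket{v(\sigma_z)} \bigg\|^2 \;=\; \sum_{z \in \Z_4^{\overline{S}}} |\wh{U}(y,z)|^2,
\]
using orthonormality of the Bell basis on the $\overline{S}$ pairs.

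Finally, the algorithm outputs $X=0$ precisely when the measured state is $\ket{\mathrm{EPR}}^{\otimes |S|} = \ket{v(\sigma_{0^S})}$, i.e.\ when $y = 0^S$ (which is the same as $\supp(x) \cap S = \emptyset$). Summing the remaining outcomes,
\[
\E[X] \;=\; \Pr[X = 1] \;=\; \sum_{y \neq 0^S} \sum_{z \in \Z_4^{\overline{S}}} |\wh{U}(y,z)|^2 \;=\; \sum_{x \,:\, \supp(x) \cap S \neq \emptyset} |\wh{U}(x)|^2 \;=\; \Inf_S[U],
\]
by \Cref{def:influence-multiple-qubits}. There is no substantial obstacle here; the only thing to be careful about is the factorization of the generalized Bell basis under the $(S, \overline{S})$ split of the EPR pairs, which justifies treating the partial measurement as a marginalization over the Pauli labels on $\overline{S}$.
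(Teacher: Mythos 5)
Your proposal is correct and follows essentially the same approach as the paper's proof: both expand $U$ in the Pauli basis, pass to the CJ state, factor the generalized Bell basis across the $(S,\overline{S})$ split, and identify the outcome $\ket{\mathrm{EPR}}^{\otimes|S|}$ with the Pauli labels supported off $S$. Your writeup is slightly more explicit in computing the full measurement-outcome distribution before summing, but the underlying argument is the same.
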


\begin{proof}
Recall that $U$ can be written in the Pauli basis as $U = \sum_{x\in \Z_4^n} \wh{U}(x)\sigma_x$. Thus, $\ket{v(U)}$ can be written as 
\begin{align*}
    \ket{v(U)} &= \sum_{x\in \Z_4^n} \wh{U}(x)\ket{v(\sigma_x)} \\
    &= \sum_{x : \supp(x)\cap S = \emptyset} \wh{U}(x)\ket{v(\sigma_x)} + \sum_{x : \supp(x)\cap S \neq \emptyset} \wh{U}(x)\ket{v(\sigma_x)}\\
    &= \sum_{x : \supp(x)\cap S = \emptyset} \wh{U}(x)\ket{v(\sigma_{x_{\overline{S}} })} \ket{v(I^{\otimes |S|})} + \sum_{x : \supp(x)\cap S \neq \emptyset} \wh{U}(x)\ket{v(\sigma_{x_{\overline{S}} })}\ket{v(\sigma_{x_{S} })}.
\end{align*}
Where $x_S \in \Z_4^S$ is notation for the restriction of $x$ onto the qubits in $S$. Similarly, $\sigma_{x_S}$ is the Pauli basis vector given by the tensor product of $| S |$ Pauli matrices according to $x_S$. Thus, for any $x\in\Z_4^n$ such that $\supp(x)\cap S \neq \emptyset$, the state $\ket{v(\sigma_{x_S})}$ is orthogonal to  the state $\ket{v(I^{\otimes |S|})} = \ket{\mathrm{EPR}}^{\otimes |S|}$. Because $\{ |\wh{U}(x)|^2\}_{x\in \Z_4^n}$ forms a probability distribution, when \Cref{alg:raw-IT} measures the qubits in $S \cup \{\wt{\ell} : \ell \in S\}$, it will return 1 with the following probability.
\begin{align*}
    \Ex[X] &= \Pr[X = 1]\\
    &= \sum_{x : \supp(x)\cap S \neq\emptyset} |\wh{U}(x)|^2\\
    &= \Inf_S[U]
\end{align*}
This completes the proof. 
\end{proof}

Note that we can boost the probability that \RIT\ outputs $1$ via amplitude amplification (see, for example, Section~2.2 of \cite{Montanaro2010}).
In particular, we can amplify the probability of \RIT\ outputting $1$ from $\delta$ to an arbitrary constant (say 0.9) via $O(1/\sqrt{\delta})$ calls to the oracles for the unitary $U$. Thus, we have the following lemma.

\begin{algorithm}
\caption{Influence Estimator via Amplitude Amplification}
\label{alg:qit}
\vspace{0.5em}
\textbf{Input:} Oracle access to $U\in\calU_N$, $S \sse [n]$, $\delta \in (0, 1]$ \\[0.5em]
\textbf{Output:} $X \in \{ 0, 1\}$ 

\

\QIT$\pbra{U, S, \delta}$:
\begin{enumerate}
	\item Use amplitude amplification with $O(1/\sqrt{\delta})$ calls to \RIT $(U, S)$. 
	\item Return the same value as \RIT $(U, S)$.
\end{enumerate}
\end{algorithm}

\begin{lemma}
Let $U\in\calU_N$ and $S\sse[n]$. If $\Inf_S[U]\geq\delta$, then $\QIT(U,S,\delta)$ as described in \Cref{alg:qit} outputs $1$ with probability at least 9/10, and if $\Inf_S[U] = 0$, then $\QIT(U,S,\delta)$ always outputs $0$. Furthermore, the number of queries made to $U$ is $O(1/\sqrt{\delta})$.
\end{lemma}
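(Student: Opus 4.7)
The plan is to combine the preceding expectation lemma for \RIT\ with the standard amplitude amplification machinery of Brassard, H\o yer, Mosca, and Tapp (as invoked in Section~2.2 of \cite{Montanaro2010}). That lemma already tells us that a single call to \RIT$(U,S)$ uses exactly one query to $U$ and returns $1$ with probability exactly $p := \Inf_S[U]$. I would then handle the two cases in the statement separately.

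For perfect soundness when $\Inf_S[U]=0$, I would observe that the pre-measurement state produced inside \RIT\ has zero overlap with the ``good'' subspace whose Bell-basis measurement outcome triggers a return value of $1$ (namely the orthogonal complement of $\ket{\mathrm{EPR}}^{\otimes|S|}$ on the measured register). Amplitude amplification only applies reflections within the two-dimensional subspace spanned by the good and bad components of this state, so if the good component is the zero vector, every amplified iterate remains entirely within the bad subspace. Hence \QIT$(U,S,\delta)$ returns $0$ with probability $1$.

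For the case $\Inf_S[U]\geq \delta$, I would appeal to the standard guarantee that, given a quantum subroutine whose success probability is known to be at least $\delta$, amplitude amplification with $O(1/\sqrt{\delta})$ invocations of the subroutine boosts the success probability to any desired constant (in particular $9/10$). Since every invocation of \RIT\ makes exactly one query to $U$, the overall query complexity of \QIT\ is $O(1/\sqrt{\delta})$, giving the two quantitative claims simultaneously.

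The main obstacle I anticipate is the well-known overshooting phenomenon: if the true $p$ is much larger than the lower bound $\delta$, then running $\Theta(1/\sqrt{\delta})$ Grover-style iterations can rotate the state past the good subspace and actually decrease the success probability below $9/10$. Any of the standard remedies suffices here and should be mentioned: one can pad the initial state with an auxiliary qubit in a known superposition that drives the effective success probability to be close to $\delta$ exactly; one can use the fixed-point amplitude amplification of Yoder, Low, and Chuang; or, following Brassard et al., one can choose the number of amplification rounds uniformly at random from $\{1,\ldots,\lceil \pi/(4\sqrt{\delta})\rceil\}$ and observe that the resulting success probability is a constant in expectation (which can then be boosted by a constant number of repetitions). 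All three variants preserve the perfect zero-input behavior described above and achieve the $9/10$ success probability within $O(1/\sqrt{\delta})$ queries, completing the proof.
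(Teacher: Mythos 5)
Your proposal is correct and follows the paper's approach exactly: it applies amplitude amplification to \RIT, whose success probability the preceding unbiasedness lemma pins down as $p=\Inf_S[U]$. The paper in fact gives essentially no proof at all---just a one-line pointer to Section~2.2 of Montanaro and Osborne---so the details you supply (perfect soundness because the good component of the \RIT\ state is identically zero, hence amplification never leaves the bad subspace; and a remedy for the overshooting problem when $\Inf_S[U]\gg\delta$, via randomized iteration counts or fixed-point amplification) constitute a more careful rendering of the very same argument rather than a different route.
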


\subsection{Reducing to Gapped Group Testing}
\label{subsec:ambainis}

Using our influence estimator \QIT, we can now reduce the problem of testing quantum juntas to that of Gapped Group Testing (GGT), which we define below. Our approach closely follows that of Ambainis et al. \cite{Ambainis2016}, who reduce the problem of testing $k$-juntas to GGT. We remark that certain  parameters in our adaptation of Ambainis et al.'s algorithm will be worse by a square-root factor, resulting in an overall query complexity of $\wt{O}({\sqrt{k}}/{\eps})$ for testing quantum $k$-juntas as opposed to $\wt{O}({\sqrt{k/\eps}})$ as obtained by Ambainis et al. for testing classical juntas.

We first define the exact version of Group Testing.

\begin{definition}[EGGT] \label{def:eggt}
	Let $k$ and $d$ be positive integers, $\calX$ consist of all subsets of $[n]$ with size $k$, and $\calY$ consist of all subsets of $[n]$ of size $k + d$. In the Exact Gapped Group Testing (EGGT) problem, we are given oracle access to the function $\mathrm{Intersects}_A, A \in \calX \cup \calY$ and must decide whether $A \in \calX$ or if $A \in \calY$
\end{definition}

The exact GGT will be referenced in the analysis. However, the actual algorithm we will use in our algorithm solves a more general version of EGGT.

\begin{definition}[GGT] \label{def:ggt}
	Let $k$ and $d$ be positive integers. Define two families of functions 
	\[\wt{\calX} = \cbra{f\isazofunc \left. \right\vert \: \exists A \in \calX \: \forall S \subset [n] : S \cap A = \emptyset \implies f(S) = 0}\]
	\[\wt{\calY} = \cbra{f\isazofunc \left. \right\vert \: \exists B \in \calY \: \forall S \subset [n] : S \cap B \neq \emptyset \implies f(S) = 1}\]
	In an instance of \GGT$(k,d)$, given oracle access to some function $f \in \wt{\calX}\cup\wt{\calY}$, decide whether $f\in \wt{\calX}$ or $f\in\wt{\calY}$.
\end{definition}

Note that if the function $f$ is in $\wt{\calX}$, then sets $S$ such that $S \cap A \neq \emptyset$ do not restrict $f$. They are ``irrelevant." Similarly, if $f$ is in $\wt{\calY}$, sets $S$ such that $S \cap B = \emptyset$ are ``irrelevant'' (cf. \Cref{sec:irrel-var}). More precisely, the sets that are deemed irrelevant follow from the adversary bound and is explained in more detail in Observation 3.9 of \cite{Ambainis2016}.
Also, note that if we replace implication symbols in \Cref{def:ggt} with equivalence symbols, we recover the EGGT problem. Thus, EGGT is a special case of GGT.

To get some intuition for \Cref{def:ggt}, consider the following scenario: Given $n$ soldiers, some of which are sick, you would like to determine whether there are at most $k$ sick soldiers, or if there are at least $k+d$ sick soldiers. You are allowed to test this by pooling blood samples from subsets of the $n$ soldiers, where the pooled test returns positive if the group contains at least one sick soldier.

More precisely, for an unknown $A\sse [n]$, we would like to decide if $|A|\leq k$ or $|A|\geq k+d$ given access to the following oracle
\[\mathrm{Intersects}_A(S) := \begin{cases}
 1 & A\cap S \neq \emptyset\\ 0 & \text{otherwise}	
 \end{cases}.
\]

We briefly explain the connection to junta testing: Given a unitary $U$ and a fixed threshold $\delta >0$, let $S_\delta \sse[n]$ be the set of qubits whose influence is at least $\delta$. Note then that $\QIT(U,T,\delta)$ will will return 1 with high probability if at least one variable in $S$ is in $S_{\delta}$. In this sense, we have that 
\[\QIT(U,T,\delta) \qquad \approx \qquad \mathrm{Intersects}_{S_{\delta}}(T)\]
By examining various settings of $\delta$, we can use GGT to infer the ``distribution" of influence of a unitary U among its qubits. We will make use of the following quantum algorithm obtained by Ambainis et al. for GGT. 

\begin{theorem}[Theorem 3.6 of \cite{Ambainis2016}] \label{thm:ggt-ambainis}
	There exists a quantum algorithm \QGGT\ that solves \GGT$(k,d)$ using $O(\sqrt{1 + k/d})$ queries. 
\end{theorem}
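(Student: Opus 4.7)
The plan is to apply the dual adversary bound machinery alluded to in the preceding subsection, i.e.\ the theorem of Ambainis et al.\ that the bounded-error quantum query complexity of a partial Boolean function equals (up to constants) the optimum of the adversary SDP. By that theorem together with \Cref{lemma:fn-composition}, it suffices to exhibit a feasible solution $(X_S)_{S \subseteq [n]}$ to the adversary SDP for \GGT$(k,d)$ whose objective value $\max_f \sum_S X_S[f,f]$ is $O(\sqrt{1+k/d})$; this SDP solution will simultaneously yield the algorithm \QGGT\ and the designation of irrelevant variables that we will later feed into the junta-testing composition.

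First I would set up the SDP explicitly. The inputs are indexed by subsets $A \subseteq [n]$ with either $|A| \leq k$ or $|A| \geq k+d$; the Boolean oracle variables are indexed by query subsets $S \subseteq [n]$, with value $\mathrm{Intersects}_A(S)$; and for each pair $(A_1, A_2)$ with $|A_1| \leq k$ and $|A_2| \geq k+d$ the equality constraint is $\sum_S X_S[A_1, A_2] = 1$, where the sum ranges over $S$ such that $\mathrm{Intersects}_{A_1}(S) \neq \mathrm{Intersects}_{A_2}(S)$. Next, I would symmetrize: the problem is invariant under the natural action of the symmetric group on $[n]$, so by standard averaging the optimum is attained by an $X$ that is invariant under this action, reducing the SDP to a low-dimensional program parametrized only by the sizes $|A|$ and $|S|$.

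The main step is the explicit construction of the SDP solution. I would take $X_S$ to be a carefully weighted mixture of rank-one PSD terms supported on query subsets $S$ of size of order $n/(k+d/2)$, which is the ``critical'' scale at which $\mathrm{Intersects}_A(S)$ has a constant-order expectation and that expectation shifts by an additive $\Theta(d/k)$ between the two cases $|A| \leq k$ and $|A| \geq k+d$. Amplitude-estimation-style weights on the query-set sizes should then yield a diagonal objective of order $\sqrt{k/d}$ when $d \ll k$ and of order $1$ when $d \gtrsim k$, giving the target $O(\sqrt{1+k/d})$ bound. Finally, to make the solution compatible with \Cref{lemma:fn-composition}, I would mark $S$ as irrelevant for input $A$ whenever $\mathrm{Intersects}_A(S)$ is forced by the ``type'' of $A$ alone (i.e.\ whenever the constraints on $f$ already pin down $f(S)$), zeroing the corresponding diagonal entries, as in Observation~3.9 of \cite{Ambainis2016}.

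The main obstacle is the explicit verification and optimization of the SDP construction: essentially all of the quadratic quantum speedup is encoded in the precise choice of weights on query-set sizes, and one must simultaneously saturate every cross constraint while keeping each diagonal row bounded by $O(\sqrt{1+k/d})$. A direct alternative would be to build \QGGT\ from quantum amplitude estimation applied to $\mathrm{Intersects}_A(S)$ with $S$ a random subset of size $\Theta(n/(k+d/2))$, distinguishing the two cases by estimating the intersection probability to additive accuracy $\Theta(d/k)$; this avoids the SDP but still requires careful instrumentation to produce the irrelevant-variable bookkeeping that \Cref{lemma:fn-composition} needs downstream.
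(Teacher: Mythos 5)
The paper does not prove this statement; it is imported verbatim as Theorem~3.6 of \cite{Ambainis2016} and used as a black box, so there is no in-paper proof to compare against.

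Taken on its own, your proposal is a plausible high-level outline of the argument in \cite{Ambainis2016}: pass to the dual adversary SDP, symmetrize under the natural $S_n$-action, exhibit a feasible $(X_S)$ with objective $O(\sqrt{1+k/d})$, and read off irrelevant variables from the vanishing diagonal entries. But the proposal is not a proof. The decisive step --- writing down an explicit feasible solution and verifying both the cross constraints (that $\sum_S X_S[A_1,A_2] = 1$ for every pair with $\mathrm{Intersects}_{A_1} \neq \mathrm{Intersects}_{A_2}$) and the diagonal bound --- is left at the level of intuition (``carefully weighted mixture of rank-one PSD terms,'' ``amplitude-estimation-style weights''), and you flag this yourself as the main obstacle. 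The heuristic that one should concentrate weight at scale $|S| \approx n/(k+d/2)$, where the intersection probability shifts by $\Theta(d/k)$, is reasonable, but nothing in the sketch certifies that every cross constraint can be simultaneously saturated while keeping every diagonal row below $O(\sqrt{1+k/d})$; that verification is exactly where the technical content of \cite{Ambainis2016} lives. The alternative amplitude-estimation route you mention would plausibly give an algorithm with roughly the right query count for the exact version, but, as you acknowledge, it does not by itself produce the SDP witness and irrelevant-variable bookkeeping that \Cref{lemma:fn-composition} consumes downstream in \TOne; for the composition, the witness is load-bearing, not merely the query bound. So your write-up should be understood as a description of where the cited proof goes, not as a self-contained derivation.
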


\begin{algorithm} 
\caption{Quantum $k$-Junta Tester}
\label{alg:qjt}
\vspace{0.5em}
\textbf{Input:} Oracle access to $U$, parameter $k$ \\[0.5em]
\textbf{Output:} ``Yes'' or ``No'' 

\

\QJT$\pbra{U,k}$
\begin{enumerate}
	\item Run \TOne$(U, k, l)$ for $l \in \{0,\ldots, \lfloor \log(200k) \rfloor\}$. 
	\item Run \TTwo$(U, k)$.
	\item Output ``Yes'' if all $ \lfloor \log(200k) \rfloor + 2$ testers above accept, and output ``No'' otherwise. 
\end{enumerate}

\end{algorithm}

\begin{algorithm}[t] 
\caption{Tester of the First Kind}
\label{alg:first-tester}
\vspace{0.5em}
\textbf{Input:} Oracle access to $U$, parameter $k$, parameter $l$\\[0.5em]
\textbf{Output:} ``Yes'' or ``No'' 

\ 

\TOne$(U, k, l)$:
\begin{enumerate}
    \item Let $d_l = 2^l$ and $\delta_l = \frac{\epsilon^2}{2^{l + 5} \log(400k)}.$
    \item Run \QGGT\ with parameters $k$ and $d = d_l$, and query access to the following oracle:
    \[\text{Given }S\sse[n]\text{, output } \QIT(U,S,\delta_l)\]
    \item Output ``Yes'' if GGT accepts, and ``No'' otherwise.
\end{enumerate}


\end{algorithm}

\begin{algorithm}
\caption{Tester of the Second Kind}
\label{alg:second-tester}
\vspace{0.5em}
\textbf{Input:} Oracle access to $U$, parameter $k$\\[0.5em]
\textbf{Output:} ``Yes'' or ``No'' 

\ 

\TTwo$(U, k)$:
\begin{enumerate}
    \item Estimate acceptance probability of following subroutine up to additive error 0.05:
    \begin{itemize}
        \item Generate $\bS \subset [n]$ by adding $i\in [n]$ to $\bS$ with probability $1/k$ independently.
        \item Run $\QIT(U, \bS, \delta)$ where $\delta := \frac{\eps^2}{16k}$.
    \end{itemize}
    \item Output ``Yes'' if estimated acceptance probability is at most 0.8, and ``No'' otherwise.
\end{enumerate}

    
    

\end{algorithm}

Our algorithm for quantum junta testing and analysis thereof closely follow the structure of Ambainis et al.'s algorithm for junta testing and its analysis; we include complete details below for completeness but refer the interested reader to Section~4 of \cite{Ambainis2016} for the original algorithm. 

Note that because our \QIT\ serves as a subroutine to the GGT algorithm, there is a need for a careful analysis of the properties of their composition. This is addressed in~\Cref{sec:irrel-var} at a high level and addressed in more detail in~\cite{Ambainis2016}.

\begin{theorem} \label{thm:main-tester}
Given $U \in\calU_N$, with high probability $9/10$,  the algorithm \QJT$(U)$  outputs ``Yes'' if $U$ is a $k$-junta, and outputs ``No'' if $U$ is $\eps$-far from every quantum $k$-junta. Furthermore, \QJT$(U)$ makes $O\pbra{\frac{\sqrt{k\log k}}{\eps}\log k}$ calls to the unitary $U$ and has two-sided error. 
\end{theorem}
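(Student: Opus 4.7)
The argument closely follows \cite{Ambainis2016}, with each classical ingredient replaced by its quantum-influence analogue from \Cref{subsec:prelims-influence}. It divides cleanly into three parts: query complexity, completeness, and soundness.

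The \emph{query complexity} is a direct sum. At level $l$, \TOne\ calls \QGGT\ with parameters $(k,d_l)$, $d_l=2^l$, on top of the oracle $\QIT(U,\cdot,\delta_l)$ with $\delta_l=\Theta(\eps^2/(2^l\log k))$. Combining \Cref{thm:ggt-ambainis} with \Cref{lemma:fn-composition} gives $O(\sqrt{1+k/d_l}\cdot 1/\sqrt{\delta_l})=O(\sqrt{k\log k}/\eps)$ queries, a bound independent of $l$ by construction. Summing over $l\in\{0,\dots,\lfloor\log(200k)\rfloor\}$ and adding the $O(\sqrt{k}/\eps)$ cost of \TTwo\ yields the claimed total.

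For \emph{completeness}, suppose $U=V_J\otimes I_{\overline{J}}$ with $|J|\leq k$. The Pauli expansion of $U$ is supported on $x$ with $\supp(x)\subseteq J$, so $\Inf_S[U]=0$ for every $S$ disjoint from $J$, and $\QIT(U,S,\delta_l)$ deterministically returns $0$. Hence the induced GGT oracle fits the $\wt{\calX}$-promise with witness $J$, and each \TOne\ accepts with probability at least $9/10$; the unpredictable behavior of \QIT\ on sets meeting $J$ is absorbed by the irrelevant-variable formalism of \Cref{sec:irrel-var}. For \TTwo, $\Pr_{\bS}[\bS\cap J\neq\emptyset]\leq 1-(1-1/k)^{|J|}$ is bounded away from $0.8$ by a constant, so the $\pm 0.05$ estimator outputs ``Yes'' with high probability.

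The main work, and the main obstacle, is \emph{soundness}: arguing the contrapositive, we assume every tester accepts with high probability and derive that $U$ lies within $\eps$ of some quantum $k$-junta, contradicting \Cref{lemma:eps-far-guarantees}. Acceptance of \TOne\ at level $l$ certifies (via the $\wt{\calX}$-branch of the GGT promise) a set $A_l$ of size $\leq k$ such that $\Inf_S[U]<\delta_l$ for every $S$ disjoint from $A_l$; by monotonicity (\Cref{lemma:inf-useful-props}), every qubit $i$ with $\Inf_{\{i\}}[U]\geq\delta_l$ already lies in $A_l$. A dyadic merging of these witnesses across the $O(\log k)$ scales---exactly as in \cite[\S4]{Ambainis2016}---produces a single set $J$ of size at most $k$ collecting all qubits whose single-qubit influence exceeds the smallest scale, and the calibration $\delta_l=\eps^2/(2^{l+5}\log(400k))$ ensures that the telescoping contribution from the $O(\log k)$ levels sums below $\eps^2/8$. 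Acceptance of \TTwo\ controls the residual low-influence tail: its inner subroutine accepts with probability at least $\Pr_{\bS}[\Inf_{\bS}[U]\geq\eps^2/(16k)]$, and the estimate $\leq 0.8$ together with subadditivity bounds the Pauli mass of $U$ on $x$ meeting $\overline{J}$ coming from qubits of very small individual influence by a further $\eps^2/8$. Combining the two regimes yields $\Inf_{\overline{J}}[U]<\eps^2/4$, contradicting \Cref{lemma:eps-far-guarantees}. The delicate part is the dyadic bookkeeping: the constants must close exactly, and one must invoke \Cref{lemma:fn-composition} with care so that the probabilistic failure of \QIT\ on sets of very small influence is treated as irrelevant-variable noise rather than propagated through the GGT layer.
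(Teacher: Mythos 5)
Your query-complexity accounting and your completeness argument coincide with the paper's (they are exactly the $k$-junta branches of \Cref{lem:TOne-behavior,lem:alg-second-kind}), but the soundness part has a genuine gap. You argue by contraposition and treat an acceptance of \TOne\ at level $l$ as a certificate: ``acceptance certifies (via the $\wt{\calX}$-branch of the GGT promise) a set $A_l$ of size $\leq k$ with $\Inf_S[U]<\delta_l$ for every $S$ disjoint from $A_l$.'' But \QGGT\ solves a \emph{promise} problem. For an arbitrary far-from-junta $U$ the induced oracle $S\mapsto\QIT(U,S,\delta_l)$ need not lie in $\wt{\calX}\cup\wt{\calY}$ (there may be many qubits whose influence is strictly between $0$ and $\delta_l$, where \QIT\ is undefined), and on such inputs an acceptance of \QGGT\ certifies nothing; the irrelevant-variable formalism of \Cref{sec:irrel-var} does not rescue this, since in the paper it is invoked only \emph{after} one has shown that the oracle genuinely behaves like $\mathrm{Intersects}_A$ on the sets the group tester looks at. The paper therefore argues the forward direction: if $U$ is $\eps$-far it is a non-junta of the first kind at some scale $l$ or of the second kind (\Cref{lem:far-then-one-of-two-cases}), and then the corresponding tester \emph{rejects} with high probability (\Cref{lem:TOne-behavior,lem:alg-second-kind}), because in the first case the oracle does fall into the $\wt{\calY}$-branch at that scale. (A further, smaller issue: to prove ``for every far $U$, the output is No with probability $9/10$'' by contraposition you must handle every $U$ on which the algorithm fails to reject with probability $9/10$, not only those on which \emph{every} tester accepts with high probability.)

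Even granting the certificate reading, your bookkeeping does not close. Merging the $O(\log k)$ witness sets cannot yield a single set $J$ with $|J|\leq k$ (it gives size $O(k\log k)$), and keeping only the finest-scale witness leaves up to $n-k$ qubits outside $J$, each of individual influence possibly just below $\delta_l$; since $n$ is unbounded, these individual influences cannot simply be summed to conclude $\Inf_{\overline{J}}[U]<\eps^2/4$, and no ``telescoping $\eps^2/8$'' bound is available from the \TOne\ levels alone. Controlling this low-individual-influence tail is precisely the role of \TTwo\ in the paper: the unique-influence quantity $\UInf$, its additivity over the random set $\bS$, and the Chebyshev concentration argument in the proof of \Cref{lem:alg-second-kind}, combined with Wang's \Cref{lemma:eps-far-guarantees}. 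Your sketch gestures at this (``subadditivity bounds the Pauli mass'') but does not carry it out, and subadditivity alone is the wrong tool here. To repair the proof you should establish the paper's case split (\Cref{lem:far-then-one-of-two-cases}) and then show rejection in each case, rather than extracting quantitative certificates from acceptances of a promise algorithm.
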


\begin{proof}

The setup and analysis of the algorithm (\Cref{lem:far-then-one-of-two-cases,lem:TOne-behavior,lem:alg-second-kind}) is almost the same as in \cite{Ambainis2016}, with a few constants changed. 

Without loss of generality, we assume that the first $K$ qubits are the most influential ones and are ordered in decreasing amount of influence.
\[\Inf_1[U] \geq \Inf_2[U] \geq \ldots \geq \Inf_K[U] > 0 = \Inf_{K+1}[U] = \ldots = \Inf_n[U].\]
    Of course, the tester does not know this order. The primary challenge is in showing that if $U$ is $\eps$-far from every quantum $k$-junta, then at least one of the two subroutines \TOne\ and \TTwo\ will output ``No'' with significant probability. The $\lfloor \log(200k) \rfloor + 2$ tests in the main \Cref{alg:qjt} are tailored for this purpose; in particular, we have the two following cases when $U$ is $\eps$-far from every quantum $k$-junta:
    \begin{enumerate}
        \item \textbf{Case 1:} $\sum_{j= k + 1}^{200k} \Inf_j[U] \geq \eps^2/8$. This case is further split into $\lfloor \log 200k \rfloor + 1$ subcases:
    \begin{align*}
        \left|\left\{ j \in [n] : \Inf_j[U] \geq \frac{\epsilon^2}{2^{l + 5} \log(400k)}\right\}\right| \geq k + 2^l
    \end{align*}
    for $l \in \{ 0, ..., \lfloor \log(200k) \rfloor \}$. We say that a unitary $U$ is a \emph{non-junta of the first kind}  if this is the case for some $l \in \{ 0, ..., \lfloor \log(200k) \rfloor\}$.
        \item \textbf{Case 2:} $\sum_{j= k + 1}^{200k} \Inf_j[U] \leq  \eps^2/8$. We say $U$ is a \emph{non-junta of the second kind} if this is the case. 
    \end{enumerate}
    
    \Cref{lem:far-then-one-of-two-cases} says that any unitary $U$ that is $\eps$-far from every quantum $k$-junta satisfies at least one of the two cases above. The correctness and query complexity of \Cref{alg:qjt} now follows from \Cref{lem:far-then-one-of-two-cases,lem:TOne-behavior,lem:alg-second-kind}.
\end{proof}

Finally, we prove the auxiliary lemmas used in the proof of the above theorem. \Cref{lem:far-then-one-of-two-cases,lem:TOne-behavior,lem:alg-second-kind} are analogous to Lemmas 4.3 to 4.5 of \cite{Ambainis2016}.

\begin{lemma} \label{lem:far-then-one-of-two-cases}
Every $U$ that is $\eps$-far from being a quantum $k$-junta satisfies one of the two cases above.
\end{lemma}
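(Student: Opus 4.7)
The plan is to reduce the claim to a dyadic pigeonhole argument. First I would appeal to \Cref{lemma:eps-far-guarantees} with $T = \{1,\ldots,k\}$ (the top-$k$ most influential qubits) to obtain $\Inf_{\{k+1,\ldots,n\}}[U] \geq \eps^2/4$, and then subadditivity (\Cref{lemma:inf-useful-props}) gives
\[\sum_{j=k+1}^{n} \Inf_j[U] \;\geq\; \Inf_{\{k+1,\ldots,n\}}[U] \;\geq\; \frac{\eps^2}{4}.\]
The lemma then follows from a case split on whether $\sum_{j=k+1}^{200k}\Inf_j[U] \leq \eps^2/8$ (in which case $U$ is directly a non-junta of the second kind) or $\sum_{j=k+1}^{200k}\Inf_j[U] > \eps^2/8$, and the real content is showing that in the latter regime $U$ must be a non-junta of the first kind.

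To show the latter, I would argue by contradiction: suppose that for every $l \in \{0,1,\ldots,\lfloor\log(200k)\rfloor\}$,
\[\left|\left\{j \in [n] : \Inf_j[U] \geq \tfrac{\eps^2}{2^{l+5}\log(400k)}\right\}\right| \;<\; k + 2^l.\]
Because the qubits are sorted in decreasing order of influence, this assumption forces $\Inf_j[U] < \eps^2/(2^{l+5}\log(400k))$ for every $j \geq k + 2^l$. I would then decompose the index range dyadically as
\[\{k+1,\ldots,200k\} \;\subseteq\; \bigcup_{l=0}^{m} B_l, \qquad B_l := \{k+2^l,\, k+2^l+1, \ldots, k+2^{l+1}-1\},\]
where $m := \lfloor\log(200k)\rfloor$; this is a valid covering since $2^{m+1} > 200k$. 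The blocks have size $|B_l| = 2^l$, and by the contradiction hypothesis each $j \in B_l$ satisfies $\Inf_j[U] < \eps^2/(2^{l+5}\log(400k))$. The key cancellation is that $|B_l|$ and the threshold multiply to a quantity \emph{independent of $l$}:
\[\sum_{j \in B_l} \Inf_j[U] \;<\; 2^l \cdot \frac{\eps^2}{2^{l+5}\log(400k)} \;=\; \frac{\eps^2}{32\log(400k)}.\]

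Summing over $l = 0,\ldots,m$ and using $m+1 = \lfloor\log(200k)\rfloor + 1 \leq \log(400k)$ yields
\[\sum_{j=k+1}^{200k} \Inf_j[U] \;\leq\; \frac{(m+1)\,\eps^2}{32\log(400k)} \;\leq\; \frac{\eps^2}{32} \;<\; \frac{\eps^2}{8},\]
contradicting the assumption that we are in the ``sum large'' case. Hence some subcase $l$ of Case~1 holds and $U$ is a non-junta of the first kind.

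The main obstacle is purely bookkeeping: choosing the right dyadic partition so that the threshold factor $2^{-l}$ cancels against the block size $2^l$, and verifying that $\lfloor \log(200k)\rfloor + 1 \leq \log(400k)$ absorbs the logarithmic overhead from summing over $l$. Once these are lined up, the argument is a clean pigeonhole on the sorted influence tail.
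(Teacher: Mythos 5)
Your proof is correct, and it establishes the same fact by a genuinely different route than the paper. The paper's proof partitions the unit interval of \emph{influence values} into dyadic intervals $A_l$, collects indices into buckets $B_l$ accordingly, separates out a ``too-small'' bucket $W_\infty$, and then runs an averaging argument to find a heavy level $l^\ast$ whose bucket must be large. You instead argue by contraposition: assume no subcase of Case~1 holds, use the sorted ordering to convert each failed subcase into a pointwise upper bound $\Inf_j[U] < \eps^2/(2^{l+5}\log 400k)$ for all $j \geq k + 2^l$, then dyadically partition the \emph{index range} $\{k+1,\ldots,200k\}$ into blocks of size $2^l$, and let the $2^l \cdot 2^{-l}$ cancellation plus the sum over $\leq \log(400k)$ blocks force the total to be below $\eps^2/8$. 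Both proofs turn on the same dyadic cancellation and the same constant $\log(400k)$, but your index-space contraposition avoids the paper's extra bottom bucket and averaging step, and I find it a bit cleaner. One small remark: the appeal to \Cref{lemma:eps-far-guarantees} and subadditivity at the top of your write-up is not actually used anywhere — the ``$\sum_{j=k+1}^{200k}\Inf_j[U] \leq \eps^2/8$'' branch of the case split is already Case~2 by definition, regardless of any lower bound on the tail influence — so that preamble can be dropped.
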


\begin{proof}
 It suffices to show that if $U$ is a non-junta of the first kind, then at least one of the $\lfloor \log 200k \rfloor + 1$ sub-cases holds. By definition, we have 
 \[\sum_{j= k + 1}^{200k} \Inf_j[U] \geq \eps^2/8.\]
 Define 
 \[\eps' = \frac{\eps^2}{32 \log (400k)}\]
 and consider the partition of $[0,1]$ given by 
 \[A_{\infty} = \left[0, \frac{\eps'}{2^{\lfloor \log 200k \rfloor}} \right), \quad A_0 = [\eps', 1], \quad A_{l} = \left[\frac{\eps'}{2^l}, \frac{\eps'}{2^{l - 1}}\right) \]
 where $l\in \{\lfloor \log 200k \rfloor, \ldots, 1\}$. Define $B_l := \{ j \in \{k + 1,..., 200k\} : \Inf_j[U] \in A_l\},$
 and note that each $j\in[n]$ is included in exactly one of the $B_l$. Writing 
 \[W_l = \sum_{j \in B_l} \Inf_j[U] \qquad\text{we have}\qquad \sum_l W_l \geq \eps^2/8\] as $U$ is a non-junta of the first kind. We also have that \[W_\infty < 200k \pbra{\frac{\eps^2}{32 \cdot 2^{\lfloor \log 200k \rfloor}}} < \frac{\eps^2}{16}.\] Since the maximum of the $W_l$'s is at least their average, there exists $l^\ast \in \{ 0, 1, ... \lfloor \log 200k \rfloor \}$ such that \[W_{l^\ast} \geq \frac{\eps^2}{16 \log 400k},\] which in turn implies
\begin{align*}
|B_l| \geq \frac{\frac{\eps^2}{16 \log 400k}}{\frac{\eps^2\cdot 2^{1-l}}{32 \log 400k}} = 2^l.
\end{align*}

Every variable $j \in B_l$ has influence at least $\frac{\eps'}{2^l} =: \delta_l$. Furthermore, since the influence of variables are ordered in decreasing order, each variable $j \in [k]$ also has at least $\delta_l$ influence. Thus, there are at least $k + 2^l$ indices $j$ such that $\Inf_j[U] \geq \delta_l$, and $U$ satisfies the first case for this particular $l$.

\end{proof}

\begin{lemma} \label{lem:TOne-behavior} If $U$ is a $k$-junta, then all calls to \TOne\ will accept with high probability. If $U$ is a non-junta of the first kind, then one of the calls to \TOne\ will reject with high probability. Finally, the overall query complexity of all $\lfloor \log 200k \rfloor + 1$ testers of the first kind is \[O\pbra{\frac{\sqrt{k \log k}}{\eps} \log k}.\]
\end{lemma}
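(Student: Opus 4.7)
The plan is to verify the three claims in turn, with the common thread being to view the oracle $f(S) := \QIT(U, S, \delta_l)$ as an instance of the Gapped Group Testing problem of \Cref{def:ggt} and invoke \QGGT\ together with the composition framework of \Cref{lemma:fn-composition}. The main obstacle is that \QIT's output is only controlled on sets $S$ whose influence is either zero or at least $\delta_l$, so its behavior on sets with small but nonzero influence is unpredictable; this is exactly the scenario the ``irrelevant variable'' feature of \Cref{lemma:fn-composition} is designed to absorb, without the logarithmic overhead that naive amplification would incur.

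For completeness, suppose $U = V_J \otimes I_{\overline{J}}$ is a quantum $k$-junta with junta set $J$, $|J| \leq k$. Whenever $S \cap J = \emptyset$, the Pauli decomposition of $U$ is supported on $x$ with $\supp(x) \subseteq J$, so $\Inf_S[U] = 0$ and $f(S) = 0$ deterministically; sets with $S \cap J \neq \emptyset$ are irrelevant in the sense of \Cref{lemma:fn-composition}. Thus $f \in \wt{\calX}$ with witness $A = J$, and \TOne$(U, k, l)$ accepts with high probability for every $l$. For soundness, let $U$ be a non-junta of the first kind with witnessing index $l^*$, and set $B := \{j : \Inf_j[U] \geq \delta_{l^*}\}$, so that $|B| \geq k + d_{l^*}$. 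Choose $B' \subseteq B$ of size $k + d_{l^*}$: for every $S$ with $S \cap B' \neq \emptyset$, monotonicity of influence (\Cref{lemma:inf-useful-props}) gives $\Inf_S[U] \geq \delta_{l^*}$, so $f(S) = 1$ with probability at least $9/10$. Hence $f$ fits $\wt{\calY}$ with witness $B'$ in the bounded-error sense of \Cref{lemma:fn-composition}, and the $l^*$-th call to \TOne\ rejects with high probability.

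For the query complexity, \Cref{thm:ggt-ambainis} gives that \QGGT$(k, d_l)$ issues $O(\sqrt{1 + k/d_l})$ oracle queries, and each call to \QIT$(U, \cdot, \delta_l)$ uses $O(1/\sqrt{\delta_l}) = O(2^{l/2}\sqrt{\log k}/\eps)$ queries to $U$. Multiplying these gives a per-level cost of $O(\sqrt{(k + 2^l)\log k}/\eps)$. Summing over $l \in \{0, \ldots, \lfloor \log(200k) \rfloor\}$: for $2^l \leq k$ each summand is $\Theta(\sqrt{k\log k}/\eps)$, contributing $O(\log k)$ equal terms; for $2^l > k$ the summand grows geometrically in $l$ and the tail is dominated by the final term, which is again $O(\sqrt{k\log k}/\eps)$. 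The total is $O\pbra{\frac{\sqrt{k\log k}}{\eps}\log k}$, as claimed.
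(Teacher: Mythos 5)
Your proof is correct and follows essentially the same route as the paper: completeness via the junta set $J$ (with sets intersecting $J$ deemed irrelevant), soundness via a size-$(k+d_{l^*})$ witness of high-influence indices combined with monotonicity of influence, and a per-level cost obtained by multiplying the GGT query count with the amplitude-amplified influence-estimator cost. The only cosmetic differences are that the paper spells out the $F$, $(G_S)$ composition formalism and the precise irrelevant-variable rule ($|S\cap A|\neq 1$ in the non-junta case) more explicitly, and it bounds each per-level term uniformly by $O(\sqrt{k\log k}/\eps)$ rather than splitting the sum by whether $2^l\le k$; your ``geometric tail'' remark is unnecessary since $2^l\le 200k$ throughout, but it does not affect the final bound.
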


\begin{proof}

The composition in \TOne\ is done as described in Definition 2.10 of \cite{Ambainis2016} which allows for a tight query-complexity. Towards this definition, $F$ and $(G_j)$ are defined as follows: The partial function $F$ is the EGGT function from \Cref{def:eggt}. $F$ takes in a function $h$ and outputs $0$ if $h = \mathrm{Intersects}_A, |A| = k$ and $1$ if $h = \mathrm{Intersects}_A, |A| = k + d$. In other cases, $F$ is undefined. For each $S \subset [n]$, the partial function $G_S$ is our \QIT\ on set S. $G_S$ is partial in that it equals 1 if $\Inf_S[U] \geq \delta$, equals 0 if $\Inf_S[U] = 0$, but is undefined for anything in between. Thus, \TOne\ is equivalent to the following composition:
\begin{align}
    U \rightarrow (G_{\emptyset}(U), G_{\{ 1\}}(U), G_{\{ 2\}}(U), ... G_{[n]}(U))
\end{align}
The irrelevant variables to the function $F$ correspond to the sets $S$ that do not impact its output; that is, whatever \QIT\ outputs on these sets do not matter to $F$. Because we use the same GGT algorithm, derived from the same solution to the adversary bound as \cite{Ambainis2016}, we have the same irrelevant variables. 

\begin{enumerate}
    \item If the input A is in $\calX (|A| = k)$, a set $S \subset [n]$ is irrelevant if $S \cap A \neq \emptyset$. That is, if $U$ is a $k$-Junta, \TOne\ only looks at sets such that $S \cap A = \emptyset$.
    \item If the input A is in $\calY (|A| = k + d)$, a set $S \subset [n]$ is irrelevant if $|S \cap A| \neq 1$. In particular, if $U$ is $\eps-$far from a $k$-Junta, \TOne\ ignores sets such that $S \cap A = \emptyset$
\end{enumerate}

Suppose $U$ is a non-junta of the first kind, satisfying case $l$, in the sense of Lemma \ref{lem:far-then-one-of-two-cases}. By definition, there is an $A \subset [n], |A| = k + 2^l$ such that for all $j \in A$, $\Inf_j[U] \geq \delta_l$. By the monotonicity of influence, $\Inf_S[U] \geq \delta$ for all $S$ that intersect A. Finally, because the sets that are disjoint from A are irrelevant in the non-junta case, $\TOne$'s oracle behaves like an $\mathrm{Intersect_A}$ oracle that depends on at least $k + 2^l$ indices. Thus, this instantiation of \Cref{alg:first-tester}'s GGT will reject with high probability. 

Finally, if $U$ is a $k$-junta, then there is a set $A \subset [n], |A| \leq k$ such that if $S \cap A = \emptyset$, then $\Inf_S[U] = 0$. Because all sets $S \cap A \neq \emptyset$ are irrelevant in the $k$-junta case, \TOne's oracle behaves like an $\mathrm{Intersect_A}$ oracle that depends on $k$ indices. Thus, all the $\TOne$'s will accept with high probability as there are at most $k$ influential variables. 

Thus, the tester of the first kind, a group tester instantiated with $d = 2^l$ and $\delta_l$, will be able to distinguish between this case from case where $U$ is a $k$-junta, where the set of variables of influence at least $\delta_l$ is size at most $k$.

Finally, for a particular value of $l$, the query complexity of the influence tester is $O({\delta_l}^{-1/2})$ while the query complexity of the corresponding group tester instance is $O(\sqrt{k/d_l})$. It then follows by \Cref{lemma:fn-composition} that the complexity of any tester of the first kind is 
\[O\pbra{\sqrt{\frac{k}{2^l}}\cdot \sqrt{\frac{2^l \log 400k}{\eps^2}}} = O\pbra{\frac{\sqrt{k \log k}}{\eps}}\]
giving an overall query complexity of 
\[O\pbra{\frac{\sqrt{k \log k}}{\eps} \log k}\]
for all $\lfloor \log(200k) \rfloor + 1$ testers of the first kind. 
\end{proof}

\begin{lemma} \label{lem:alg-second-kind}
\Cref{alg:second-tester} accepts if $U$ is a $k$-junta and rejects if $U$ is a non-junta of the second kind, and its query complexity is $O(\sqrt{k/\eps})$
\end{lemma}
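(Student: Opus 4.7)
The plan is to analyze the acceptance probability $p := \Pr[\QIT(U, \bS, \delta) = 1]$ of the inner subroutine, over the joint randomness of $\bS$ and the internal coin of $\QIT$, and to show that $p \leq 3/4$ whenever $U$ is a $k$-junta while $p \geq 0.85$ whenever $U$ is a non-junta of the second kind. A standard Chernoff bound then shows that $O(1)$ repetitions of the subroutine suffice to estimate $p$ to additive error $0.05$ with probability at least $9/10$, so the threshold test against $0.8$ yields the correct answer with the desired confidence. The claimed query complexity follows immediately from the fact that each call to $\QIT(U, \cdot, \delta)$ with $\delta = \eps^2/(16k)$ uses $O(1/\sqrt{\delta})$ queries to $U$.

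For the $k$-junta direction, let $T$ be the set of (at most) $k$ relevant qubits of $U$. Since each $i \in T$ lies in $\bS$ independently with probability $1/k$,
\[
\Pr[\bS \cap T = \emptyset] = (1-1/k)^{|T|} \geq (1-1/k)^k \geq 1/4,
\]
assuming $k \geq 2$ (the tiny cases being handled separately). Whenever $\bS \cap T = \emptyset$, we have $\Inf_{\bS}[U] = 0$ and so $\QIT$ deterministically outputs $0$, giving $p \leq 1 - (1-1/k)^k \leq 3/4$.

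For the non-junta-of-second-kind direction, I would first extract a set $R$ of individually low-influence qubits that collectively carry substantial influence. Applying \Cref{lemma:eps-far-guarantees} with $T = [k]$, together with subadditivity from \Cref{lemma:inf-useful-props} and the hypothesis $\sum_{j=k+1}^{200k}\Inf_j[U] \leq \eps^2/8$, yields $\Inf_R[U] \geq \eps^2/8$ for $R := \{200k+1,\ldots,n\}$; moreover, since influences are sorted in decreasing order, $\max_{j\in R}\Inf_j[U] \leq \Inf_{200k}[U] \leq \eps^2/(1500 k)$. For each Pauli string $x$ with $\supp(x)\cap R\neq\emptyset$, I would pick a canonical representative $\phi(x)\in\supp(x)\cap R$ and consider the random variable
\[
Z \;:=\; \sum_{x\,:\,\supp(x)\cap R\neq\emptyset} |\wh{U}(x)|^2\,\mathbf{1}[\phi(x)\in\bS],
\]
which is a weighted sum of \emph{independent} $\mathrm{Bern}(1/k)$ variables satisfying $Z \leq \Inf_{\bS}[U]$. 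Grouping terms by $j = \phi(x)$ with weights $w_j := \sum_{x : \phi(x)=j}|\wh{U}(x)|^2 \leq \Inf_j[U]$, one directly computes $\E[Z] = \Inf_R[U]/k \geq 2\delta$ and $\mathrm{Var}(Z) \leq (1/k)(\max_j w_j)\Inf_R[U]$, so $\mathrm{Var}(Z)/\E[Z]^2 \leq k(\max_{j\in R}\Inf_j[U])/\Inf_R[U] \leq 8/1500$. Chebyshev then gives $\Pr[\Inf_\bS[U] < \delta] \leq \Pr[Z < \E[Z]/2] \leq 4\,\mathrm{Var}(Z)/\E[Z]^2 < 1/20$, so $p \geq (9/10)(19/20) > 0.85$. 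The hard part is precisely this second-moment control: the constant $200$ in the definition of non-junta of the second kind is calibrated exactly so that the individual influences in $R$ are small enough to force $\mathrm{Var}(Z)/\E[Z]^2$ to be an absolute constant independent of $\eps$, which is what opens the gap between the thresholds $3/4$ and $0.85$.
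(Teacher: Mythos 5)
Your proof is correct and follows essentially the same route as the paper's: the $k$-junta side via $(1-1/k)^k \geq 1/4$, and the non-junta side by assigning each Pauli weight $|\wh{U}(x)|^2$ with $\supp(x)\cap R\neq\emptyset$ to a single representative qubit $\phi(x)\in R$ so that the resulting lower bound on $\Inf_{\bS}[U]$ is a sum of independent terms, then bounding mean and variance and applying Chebyshev. The paper packages the same construction as $\UInf_j[U]$ (taking $\phi(x)$ to be the minimum of $\supp(x)\cap\{200k+1,\ldots,n\}$), but the decomposition, the second-moment bound, and the role of the constant $200$ are identical.
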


\begin{proof}
We show that the procedure described in Item~1 of \Cref{alg:second-tester} has acceptance probability \emph{at most} $ 0.75$ if $U$ is a k-junta, and has acceptance probability \emph{at least} $ 0.85$ if $U$ is a non-junta of the second kind.

Suppose $U$ is a $k$-junta. Then the probability that the set $S$ does not intersect the set $J$ of relevant variables is 
\[\left(1 - \frac{1}{k}\right)^{|J|} \geq \pbra{1 - \frac{1}{k}}^{k} \geq \frac{1}{4}.\]
Therefore, with probability at least $0.25$, we have $S\cap J = \emptyset$ in which case $\Inf_{S}[U] = 0$. It follows then that the acceptance probability above is at most 0.75. 

Now suppose $U$ is a non-junta of the second kind. For $j \in [n]$, define
\begin{align*}
\UInf_j[U] :=  \begin{cases} 
      0 &   j \leq 200k \\
      \sum_{x : \supp(x) \cap \{ 200k + 1 ... j\} = \{ j\} } |\wh{U}(x)|^2 & \text{otherwise}
   \end{cases}.
\end{align*}
For $S \subset [n]$, define $\UInf_S[U] := \sum_{j \in S} \UInf_j[U]$. It is easy to see that 
\[\UInf_S[U] \leq \Inf_S [U],\] 
and that for $S, T\sse [n]$ with $S\cap T = \emptyset$, we have 
\[\UInf_{S \cup T}[U] = \UInf_S[U] + \UInf_T[U].\]
Now, because $U$ is $\eps$-far from every quantum $k$-junta, by Lemma \ref{lemma:eps-far-guarantees}, we have that
\begin{equation} \label{eq:banana-1}
    \Inf_{\{k + 1... K\}} [U]\geq \eps^2/4,
\end{equation}
and since $U$ is a non-junta of the second kind,
\begin{equation} \label{eq:banana-2}
    \sum_{j = k + 1}^{200k} \Inf_j[U] \leq \frac{\eps^2}{8}.
\end{equation}
Combining \Cref{eq:banana-1,eq:banana-2} we get that
\begin{align}
    \UInf_{[n]}[U] &= \Inf_{\{200k + 1 ... K\}} [U] \nonumber \\
&\geq \Inf_{\{k + 1... K\}} [U] - \sum_{j = k + 1}^{200k} \Inf_j[U] \nonumber \\
&\geq \frac{\eps^2}{8}. \label{eq:banana-3}
\end{align}

Consider now the random variable $\UInf_{\bS}[U]$ where $\bS$ is drawn as described in \Cref{alg:second-tester}. We have 
\[\mu:= \mathop{\mathbb{E}}_{\bS}\sbra{\UInf_{\bS}[U]} = \frac{1}{k}\cdot \UInf_{[n]}[U] \geq \frac{\eps^2}{8k}.\]
We also have 
\begin{align*}
    \sigma^2 := \Var\sbra{\UInf_{\bS}[U]} &\leq \frac{1}{k} \sum_j \UInf_j[U]^2\\
&\leq \frac{1}{k} \left( \max_j \UInf_j[U]\right)\cdot  \UInf_{[n]}[U]\\
&\leq \frac{1}{k} \cdot\left(\frac{\eps^2}{4 \cdot 200k}\right) \UInf_{[n]}[U]\\
&\leq \frac{\mu^2}{100}.
\end{align*}
It then follows by Chebyshev's inequality that 
\begin{align}
    \Pr\left[\UInf_{\bS}[U] < \frac{\eps^2}{16k}\right] &\leq \Pr\sbra{|\UInf_{\bS}[U] - \mu| > \frac{\mu}{2}}\\
    &\leq \Pr\sbra{|\UInf_{\bS}[U] - \mu| > 5\sigma}\\
    &\leq \frac{1}{25}.
\end{align}
In other words, the probability $\UInf_{\bS}[U] > \eps^2/16k$ is at least 0.96. So the acceptance probability of the subroutine described in Item~1 of \Cref{alg:second-tester} on $\bS$ is at least $0.9 \times 0.96 > 0.85$ if $U$ is a non-junta of the second kind. 

Finally, the subroutine of the tester of the second kind only makes queries to $\QIT$\ on $\delta = \frac{\eps^2}{16k}$, which requires complexity $O(\sqrt{k/\eps^2})$, and the outer estimation overhead is a constant. 
\end{proof}

\section{An $\Omega(\sqrt{k})$ Lower Bound for Testing Quantum $k$-Juntas}
\label{sec:lower-bound}

In this section, we obtain an $\Omega(\sqrt{k})$ lower bound for testing quantum $k$-juntas, which shows that the algorithm obtained in \Cref{sec:testing-ub} is essentially optimal (up to polylogarithmic factors in $k$). Our lower bound follows via a natural reduction from testing classical $k$-juntas to testing quantum $k$-juntas, combined with the $\Omega(\sqrt{k})$ lower bound for testing classical $k$-juntas obtained by Bun, Kothari, and Thaler \cite{Bun2017}.
The key technical insight here is in \Cref{lemma:lb-structural-result}, which shows that every quantum $k$-junta is (in a certain sense) ``close'' to a quantum Boolean function (i.e. a Hermitian quantum unitary).


In what follows, we say that an algorithm is a \emph{$(k, \eps)$-classical} (respectively \emph{quantum}) \emph{junta tester} if, given query access to a Boolean function $f\isazofunc$ (respectively unitary $U\in\calU_N$), with probability at least $9/10$ it outputs 
\begin{itemize}
	\item ``Yes'' if $f$ (respectively $U$) is a $k$-junta; and
	\item ``No'' if $f$ (respectively $U$) is $\eps$-far from every $k$-junta. 
\end{itemize}

\begin{theorem} \label{thm:reduction}
Every $T$-query $(k,\sqrt{\eps/2})$-quantum junta tester is also a $T$-query $(k, \eps)$-classical junta tester.  
\end{theorem}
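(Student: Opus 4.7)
Given a $T$-query $(k, \sqrt{\eps/2})$-quantum junta tester $\calA$, the plan is to construct a $T$-query classical tester $\calB$ by simulating $\calA$ on the diagonal unitary encoding $U_f := \mathrm{diag}((-1)^{f(x)})$ of the input function $f\isazofunc$. Each query to $U_f$ or $U_f^\dagger = U_f$ costs exactly one phase-oracle call to $f$ (by the standard phase-kickback trick), so $\calB$ makes $T$ queries in total. The YES case is automatic: if $f$ depends only on variables indexed by $S \sse [n]$ with $|S| = k$, then $U_f = V_S \otimes I_{\overline{S}}$ with $V_S$ itself diagonal, so $U_f$ is a quantum $k$-junta and $\calA$ accepts with probability at least $9/10$. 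The entire content of the theorem is the NO-case contrapositive: if $f$ is $\eps$-far from every Boolean $k$-junta, then $U_f$ is $\sqrt{\eps/2}$-far from every quantum $k$-junta.

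To prove the contrapositive, suppose $\dist(U_f, V) < \sqrt{\eps/2}$ for some quantum $k$-junta $V = W_S \otimes I_{\overline{S}}$ with $|S| = k$, and pick $\theta \in [0, 2\pi)$ realizing the distance so that $\|e^{i\theta} U_f - V\|^2 < 2N \cdot (\eps/2) = N\eps$. I would then extract a Boolean $k$-junta by rounding the diagonal of $V$: for each $y \in \zo^S$, let $g'(y)$ be the bit $b \in \zo$ that minimizes $|(-1)^b - e^{-i\theta} W_S[y,y]|$ (equivalently, $g'(y) = 0$ iff $\mathrm{Re}(e^{-i\theta} W_S[y,y]) \geq 0$), and set $g(x) := g'(x_S)$. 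This $g$ is a Boolean $k$-junta on $S$ by construction.

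By the optimality of $g'$ we have $|(-1)^{g(x)} - e^{-i\theta} W_S[x_S, x_S]| \leq |(-1)^{f(x)} - e^{-i\theta} W_S[x_S, x_S]|$ for every $x$, so the triangle inequality yields $|(-1)^{f(x)} - (-1)^{g(x)}|^2 \leq 4 |e^{i\theta} U_f[x,x] - V[x,x]|^2$. Summing over $x$ and bounding the diagonal Frobenius sum by the full squared Frobenius norm,
\[\|U_f - U_g\|^2 \;\leq\; 4\|e^{i\theta} U_f - V\|^2 \;<\; 4N\eps,\]
which, combined with the identity $\|U_f - U_g\|^2 = 4N \cdot \dist(f,g)$, gives $\dist(f,g) < \eps$, contradicting the assumption on $f$. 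The main technical obstacle here is that an arbitrary quantum $k$-junta $V$ need not be diagonal or $\pm 1$-valued on its active block $W_S$, so one cannot directly write $V = U_g$ for any Boolean $g$; the rounding argument circumvents this by exploiting only the diagonal of $V$ (the only part capable of approximating the $\pm 1$-valued diagonal $U_f$) and observing that rounding that diagonal entrywise to $\pm 1$ produces a diagonal junta supported on $S$, which is automatically $U_g$ for some Boolean $k$-junta $g$.
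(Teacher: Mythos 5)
Your proof is correct and reaches exactly the same $\sqrt{\eps/2}$ bound as the paper, but it does so by a more direct route. The paper factors the argument into three reusable pieces: (i) for any Boolean $k$-junta $g$, $\dist(U_f,U_g)\ge\sqrt{2\eps}$, proved by observing that for real diagonal matrices only $\theta\in\{0,\pi\}$ matter and then invoking closure of $k$-juntas under negation to handle the $\theta=\pi$ case; (ii) the structural \Cref{lemma:lb-structural-result}, which says that for every quantum $k$-junta $V$ there is a Boolean $k$-junta $g$ with $\dist(V,U_g)\le\dist(V,U_f)$, proved via a $2^{n-k}$-block decomposition of the diagonal together with an argmin choice of $g$; and (iii) the triangle inequality $\dist(U_f,V)\ge\tfrac12\dist(U_f,U_g)$. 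You collapse all three steps into one: fix the optimal phase $\theta$ for $\dist(U_f,V)$, round $e^{-i\theta}W_S$ entrywise to $\pm1$ to get $g$, and apply a pointwise triangle inequality on diagonal entries to get $\|U_f-U_g\|^2\le4\|e^{i\theta}U_f-V\|^2$, which directly bounds $\dist(f,g)$. Fixing the single optimal $\theta$ up front lets you absorb the negation case and avoid both the closure-under-negation step and the per-block phase optimization. Your version is more elementary and shorter; the paper's is more modular, with \Cref{lemma:lb-structural-result} stated as a stand-alone structural fact about quantum juntas. One small presentational point worth making explicit in your writeup: the entrywise bound $|(-1)^{g(x)}-e^{-i\theta}W_S[x_S,x_S]|\le|(-1)^{f(x)}-e^{-i\theta}W_S[x_S,x_S]|$ relies on $g'(x_S)$ being the $\pm1$-minimizer while $f(x)$ is an arbitrary bit that may vary with $x_{\overline S}$; this is exactly why $g$ only depends on $x_S$ and hence is a $k$-junta, and it is the same reason the paper's argmin in \Cref{eq:choice} ranges only over $k$-bit functions. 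That is implicit in what you wrote but is the crux, so it deserves a sentence.
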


Note that \Cref{thm:reduction} together with the $\Omega(\sqrt{k})$ lower bound for quantum testing of $k$-juntas by Bun, Kothari, and Thaler~\cite{Bun2017} implies the desired lower bound. Before proving \Cref{thm:reduction}, we first introduce some notation. Given a Boolean function $f\isazofunc$, we will write 
\begin{equation} \label{eq:tomato}
	U_f := \mathrm{diag}\pbra{(-1)^{f(x)}}
\end{equation}
as a diagonal matrix whose diagonal entries are the $2^n$ values of the function $f$. Note that $U_f$ is unitary as its singular values are $\pm 1$. Also, given a matrix $A$, we will use $A[i,j]$ to mean the entry of $A$ at row $i$ and column $j$.

The transformation we use to reduce Boolean functions to quantum Boolean functions (towards the goal of proving \Cref{thm:reduction}) is the natural one given by \Cref{eq:tomato}. First, if a function $f\isazofunc$ is a $k$-junta, then $U_f$ is also a quantum $k$-junta. To see this, suppose without loss of generality that the last $k$ bits of $f$ are the relevant ones\footnote{We will use this indexing convention for the remaining sections as well.}, i.e. we have 
\[f(x) = \widetilde{f}(x_{n-k+1}, \ldots, x_n)\] for some $\widetilde{f}: \zo^k \to \zo$. It then follows that
\[U_f = I^{\otimes (n - k)} \otimes U_{\wt{f}}.\]

The following lemma shows that an analogous statement holds when $f$ is far from being a $k$-junta, from which \Cref{thm:reduction} is immediate.

\begin{proposition} \label{prop:lb-no-case}
	If $f\isazofunc$ is $\eps$-far from every $k$-junta, then $U_f$ is $\sqrt{\epsilon/2}$-far from every quantum $k$-junta.
\end{proposition}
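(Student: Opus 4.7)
The plan is to reduce the quantum statement to the Boolean setting via the two-step argument outlined in the paper's overview: first show that $U_f$ is far from $U_g$ for every Boolean $k$-junta $g$, then extend to arbitrary quantum $k$-juntas via the structural result \Cref{lemma:lb-structural-result}.

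For the first step, I would compute $\dist(U_f, U_g)$ directly. Since both $U_f$ and $U_g$ are Hermitian and diagonal with $\pm 1$ entries, a short calculation (expanding the squared Hilbert--Schmidt norm and optimizing over the phase $\theta$ in \Cref{def:distance}) yields the clean formula
\[\dist(U_f, U_g)^2 \;=\; 2\min\!\bigl(\dist(f,g),\, 1 - \dist(f,g)\bigr).\]
Since both $g$ and its negation $\bar{g}$ are Boolean $k$-juntas (on the same relevant variables), the $\eps$-farness hypothesis on $f$ forces $\dist(f,g) \in [\eps,\, 1-\eps]$, and hence $\dist(U_f, U_g) \geq \sqrt{2\eps}$.

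For the second step, the structural claim I would establish is that for every quantum $k$-junta $V$, there exists a Boolean $k$-junta $g$ (depending on $V$) with $\dist(U_f, V) \geq \dist(U_f, U_g)$. The starting point is the identity $\dist(A,B)^2 = 1 - \frac{1}{N}|\Tr(A^\dagger B)|$ valid for any unitaries $A, B$ (by the same phase optimization as in Step~1). Writing $V = I_{\overline{S}} \otimes W$ with $|S|=k$ and decomposing inputs as $x = (y,z) \in \zo^{\overline{S}} \times \zo^S$, a short computation using the diagonality of $U_f$ gives
\[\Tr(U_f V^\dagger) \;=\; 2^{n-k}\sum_{z \in \zo^S} \overline{W_{zz}}\, \mu_f(z),\qquad \mu_f(z) \;:=\; \Pr_y[f(y,z)=0] - \Pr_y[f(y,z)=1].\]
Since $W$ is unitary we have $|W_{zz}| \leq 1$, so $|\Tr(U_f V^\dagger)| \leq 2^{n-k}\sum_z |\mu_f(z)|$, with equality attained by $V = U_g$ for the Boolean $k$-junta $g(y,z) := \mathbf{1}\{\mu_f(z) < 0\}$ on the same variables $S$. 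This yields $\dist(U_f, V) \geq \dist(U_f, U_g)$, which combined with Step~1 proves the proposition in the stronger form $\dist(U_f, V) \geq \sqrt{2\eps} \geq \sqrt{\eps/2}$.

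The main technical point is the second step, since quantum $k$-juntas form a much richer class than the diagonal Boolean juntas $\{U_g\}$, and a priori $V$ could conceivably approximate $U_f$ better than any such $U_g$. The key observation that makes the reduction clean is that $U_f$ is diagonal, so only the diagonal entries of $W$ enter $\Tr(U_f V^\dagger)$; this collapses the continuous optimization over the unitary group to a discrete sign-choice problem that is optimally solved by an appropriate Boolean junta $g$ on the same qubits.
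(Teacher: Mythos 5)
Your proof is correct, and it takes a genuinely different route through the key structural step. The paper proves \Cref{lemma:lb-structural-result}: for every quantum $k$-junta $V$ there is a Boolean $k$-junta $g$ (chosen to minimize a diagonal-restricted distance to $\wt{V}$) with $\dist(V,U_g) \leq \dist(V,U_f)$, and then closes via the triangle inequality $2\,\dist(U_f,V) \geq \dist(U_f,U_f) + \dist(V,U_g) \geq \dist(U_f,U_g) \geq \sqrt{2\eps}$, paying a factor of $2$ and landing on $\sqrt{\eps/2}$. You instead prove the ``dual'' inequality $\dist(U_f,V) \geq \dist(U_f,U_g)$ for a $g$ that depends only on $f$ and the junta's support $S$ (the sign-choice $g(y,z) = \mathbf{1}\{\mu_f(z) < 0\}$), by working directly from the identity $\dist(A,B)^2 = 1 - \tfrac{1}{N}|\Tr(A^\dagger B)|$ for unitaries and observing that diagonality of $U_f$ confines the overlap $\Tr(U_f V^\dagger)$ to the diagonal entries $W_{zz}$, which satisfy $|W_{zz}|\leq 1$; the optimal Boolean sign pattern is then obviously extremal. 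This bypasses the triangle inequality entirely and yields the strictly stronger bound $\dist(U_f,V) \geq \sqrt{2\eps}$ rather than $\sqrt{\eps/2}$ --- a factor-of-$2$ improvement in the distance guarantee. I verified the two computations underlying your argument: the phase-optimized distance formula, and the expansion $\Tr(U_f V^\dagger) = 2^{n-k}\sum_z \overline{W_{zz}}\,\mu_f(z)$ followed by $|\cdot| \leq 2^{n-k}\sum_z|\mu_f(z)|$ with equality at $V=U_g$; both are correct. In short: the paper picks $g$ as the nearest Boolean junta to $V$ and uses the triangle inequality, whereas you pick $g$ as the nearest Boolean $k$-junta to $U_f$ on the support $S$ and compare overlaps directly; both are valid, yours is shorter and tighter.
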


\begin{proof}
	We will first show if $g\isazofunc$ is a $k$-junta, then 
	\begin{equation} \label{eq:lb-1}
		\dist(U_f, U_g) \geq \sqrt{2\epsilon}.	
	\end{equation}
	As $f$ is $\eps$-far from $g$, we have that 
	\[\Pr\sbra{f \neq g} \geq \eps.\]
	Consider $U_g$, the unitary whose diagonal entries are the values of $g$, as we did with $f$ above. The distance between $U_g$ and $U_f$ is at least
\begin{align}
\dist(U_f, U_g) ^2 &= \frac{1}{2N} \cdot \min_\theta   \| e^{i \theta} U_f - U_g\|^2 \nonumber\\
&= \min \left(\frac{1}{2N} \| U_f - U_g \|^2, \frac{1}{2N} \| -U_f - U_g \|^2\right) \label{eq:realsmatter}\\
&= \min\pbra{\frac{2}{N} \sum_{x\in\zo^n} \pbra{\frac{f(x) - g(x)}{2}}^2, \frac{2}{N} \sum_{x\in\zo^n} \pbra{\frac{f(x) + g(x)}{2}}^2} \nonumber\\
&= 2 \min \pbra{\Pr[f \neq g], \Pr[f = g]} \nonumber\\
&\geq 2\eps.\label{eq:kjuntaclosure}
\end{align}

Equation \ref{eq:realsmatter} holds because $U_f$ and $U_g$ are both diagonal with real entries, so the only possible phases that would minimize the Frobenius norm of their difference are $\theta = 0 $ or $\pi$. Equation \ref{eq:kjuntaclosure} holds because $k$-juntas are closed under negation; in more detail, if $g$ is a $k$-junta, then $1-g$ is also a $k$-junta and so \[\Pr[f=g] = \Pr[f \neq 1-g] \geq \eps.\]

We thus have that $\dist(U_f, U_g) \geq \sqrt{2\eps}$. In order to prove the lemma, it suffices to show that for any quantum $k$-junta $V$, there exists a Boolean $k$-junta $g\isazofunc$ such that $\dist(V, U_g) \leq \dist(V, U_f)$. This is proved in \Cref{lemma:lb-structural-result}.

To see why this suffices, note that if this were the case, then by the triangle inequality, 
\[\dist(V, U_f) + \dist(V, U_g) \geq \dist(U_f, U_g).\]
However, as $\dist(U_f, U_g) \geq \sqrt{2\eps}$ by \Cref{eq:kjuntaclosure}, and as $\dist(V, U_g) \leq \dist(V, U_f)$ by \Cref{lemma:lb-structural-result}, we have that 
\[2\cdot \dist(V, U_f) \geq \sqrt{2\eps}\]
and so the result follows.
\end{proof}


\begin{lemma} \label{lemma:lb-structural-result}
Suppose $f\isazofunc$ is $\eps$-far from every $k$-junta. Then, for every quantum $k$-junta $V\in\calU_N$, there exists some Boolean function $g\isazofunc$ that is a $k$-junta for which 
\[\dist(V,U_g)\leq \dist(V, U_f).\]
\end{lemma}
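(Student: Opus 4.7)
The plan is to exploit the block-diagonal structure of the quantum $k$-junta $V$ against the diagonal structure of $U_f$, and then apply a simple averaging argument to pick out a good Boolean $k$-junta $g$. Following the indexing convention of \Cref{prop:lb-no-case}, assume without loss of generality that the relevant qubits of $V$ are the last $k$, so that $V = I^{\otimes(n-k)} \otimes W$ for some $W \in \calU_{2^k}$. First I would index computational basis states as $\ket{x}\ket{y}$ with $x \in \zo^{n-k}$ and $y \in \zo^k$; then $V[(x,y),(x',y')] = \delta_{xx'}\, W[y,y']$, while $U_f[(x,y),(x',y')] = (-1)^{f(x,y)}\, \delta_{xx'}\delta_{yy'}$. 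Both entries vanish unless $x = x'$, so the Frobenius norm splits along the first $n-k$ qubits:
\[\|e^{i\theta} V - U_f\|^2 \;=\; \sum_{x \in \zo^{n-k}} \|e^{i\theta} W - D_x^f\|^2,\qquad \text{where } D_x^f := \mathrm{diag}((-1)^{f(x,y)})_{y \in \zo^k}.\]

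Next, let $\theta^\ast$ be a phase achieving the minimum in the definition of $\dist(V, U_f)$, so that $\dist(V, U_f)^2 = \frac{1}{2N}\sum_x \|e^{i\theta^\ast} W - D_x^f\|^2$. Since the minimum of a nonnegative sequence is no larger than its average, there exists some $x^\ast \in \zo^{n-k}$ with
\[\|e^{i\theta^\ast} W - D_{x^\ast}^f\|^2 \;\leq\; 2^{-(n-k)}\sum_x \|e^{i\theta^\ast} W - D_x^f\|^2.\]
I would then define $g \colon \zo^n \to \zo$ by $g(x,y) := f(x^\ast, y)$; this depends only on the last $k$ bits and is therefore a Boolean $k$-junta. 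Because $D_x^g = D_{x^\ast}^f$ for \emph{every} $x$, the same block decomposition applied to $U_g$ yields
\[\dist(V, U_g)^2 \;\leq\; \frac{1}{2N}\|e^{i\theta^\ast} V - U_g\|^2 \;=\; \frac{2^{n-k}}{2N}\|e^{i\theta^\ast} W - D_{x^\ast}^f\|^2 \;\leq\; \frac{1}{2N}\sum_x \|e^{i\theta^\ast} W - D_x^f\|^2 \;=\; \dist(V, U_f)^2,\]
which is the desired inequality.

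There is no real obstacle here; the only minor subtlety is that the optimal phase for $\dist(V, U_g)$ need not coincide with the $\theta^\ast$ that optimized $\dist(V, U_f)$, but since we are proving an upper bound on $\dist(V, U_g)$, it suffices to plug in $\theta = \theta^\ast$ and take advantage of the outer minimum in the definition of $\dist$. It is worth noting in passing that the hypothesis ``$f$ is $\eps$-far from every $k$-junta'' is not actually used anywhere in this lemma --- it is only needed upstream in \Cref{prop:lb-no-case} --- so the same averaging construction works for arbitrary $f\isazofunc$.
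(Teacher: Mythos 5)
Your proof is correct, and it takes a route that is meaningfully different from the paper's. Both arguments rest on the same block decomposition $\|e^{i\theta}V - U_f\|^2 = \sum_{x\in\zo^{n-k}}\|e^{i\theta}W - D_x^f\|^2$, but the two proofs diverge in how they produce $g$ and close the inequality. The paper defines $\widetilde g$ as the global $\arg\min$ over \emph{all} Boolean functions $h$ on $k$ bits of $\dist(\widetilde V, U_h)$, restricts attention to the diagonal contribution, pushes the $\min_\theta$ inside the sum over blocks (so each block gets its own phase $\theta_j$, which can only decrease the quantity), and then lower-bounds each block term by the global minimum. Your version instead fixes the single optimal phase $\theta^\ast$ for $\dist(V,U_f)$, observes that some block $x^\ast$ is at most average, and defines $g(x,y):=f(x^\ast,y)$. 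This sidesteps the per-block rephasing step entirely and only requires choosing among restrictions of $f$ rather than optimizing over all $2^{2^k}$ Boolean functions, so it is a bit more elementary and a bit more constructive. Your closing observation — that the hypothesis ``$f$ is $\eps$-far from every $k$-junta'' is never used in this lemma — is accurate; the same is in fact true of the paper's proof, and the hypothesis could be dropped from the lemma statement in both cases.
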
 

\begin{proof}

We can assume without loss of generality that $V$ is a quantum $k$-junta on the last $k$ qubits. We define $g\isazofunc$ as follows: Writing $V = I^{\otimes (n - k)}\otimes\wt{V}$, let 
\begin{equation} \label{eq:choice}
	\widetilde{g} = \arg\min_{h \in \zo^{2^k}} \dist\pbra{\wt{V}, \mathrm{diag}((-1)^{h(x)})} = \arg\min_{h \in \zo^{2^k}} \pbra{ \min_\theta \| e^{i \theta} \wt{V} - \mathrm{diag}((-1)^{h(x)})\|}
\end{equation}
and set $g := \wt{g}$
where we interpret $\wt{g}\isazofunc$ as a $k$-junta.
%
%
We claim that $\dist(U_f, V) \geq \dist(U_g, V)$. First, note that because $U_f$ and $U_g$ are both diagonal matrices, the off-diagonal contributions to 
\[\|U_f - V\|^2 = \sum_{0 \leq i,j < N} |U_f[i,j] - V[i,j]|^2\] is the same as that to $\|U_g - V\|^2$. Moreover, if we multiply the off-diagonal terms of $V$ by a phase $e^{i\theta}$, their contribution to the sum will still be the same as $U_f$ and $U_g$ are zero on their off-diagonal entries. It therefore suffices to compare the diagonal terms of these two quantities. With this in mind, we define the following quantity: For $A, B \in \mathbb{C}^{2^n \times 2^n}$, let $\underline{\text{dist}}(A, B)$ be the sum of diagonal contributions to the Frobenius norm of $A - B$, i.e. we have 
\[
\underline{\text{dist}}(A, B)^2 := \frac{1}{2N}\min_\theta\sum_{0 \leq i < N} |e^{i\theta} A[i,i] - B[i,i]|^2.
\]
We then have that 
\begin{align}
\underline{\text{dist}}(U_f, V)^2 &= \frac{1}{2N}\min_\theta \sum_{j = 0}^{2^{n - k} - 1} \sum_{l = 0}^{2^k - 1} \left|e^{i \theta} V[j \cdot 2^k + l, j \cdot 2^k + l] - U_f[j \cdot 2^k + l, j \cdot 2^k + l]\right|^2 \label{eq:block-expansion}\\
& \geq \frac{1}{2N} \sum_{j = 0}^{2^{n - k} - 1} \min_{\theta_j}   \sum_{l = 0}^{2^k - 1} \left|e^{i \theta_j} V[j \cdot 2^k + l, j \cdot 2^k + l] - U_f[j \cdot 2^k + l, j \cdot 2^k + l]\right|^2 \nonumber\\
&\geq \frac{1}{2^{n - k}} \sum_{j = 0}^{2^{n - k} - 1} \underline{\text{dist}}(U_{\wt{g}}, \wt{V})^2 \label{eq:choice-of-g}\\
&= \underline{\text{dist}}(U_{\wt{g}},\wt{V})^2 \nonumber \\
&= \underline{\text{dist}}(U_g,V)^2 .\nonumber
\end{align}

In particular, \Cref{eq:block-expansion} rewrites the sum by considering $2^{n - k}$ blocks of $2^k \times 2^k$ matrices on the diagonal of $e^{i\theta}V - U_f$, and \Cref{eq:choice-of-g} follows from the choice of $g$ as a minimizer in \Cref{eq:choice}. Because the off-diagonal contributions to the expressions for distance are the same for $U_f$ and $U_g$, $\dist(U_f, V) \geq \dist(U_g, V)$, completing the proof.
\end{proof}

%
%
%
%
%
%
%

\section{Learning Quantum $k$-Juntas}
\label{sec:learning}

We present algorithm to learn quantum $k$-juntas in \Cref{subsec:learn-ub}, and our lower bound for learning quantum $k$-juntas in \Cref{subsec:learn-lb}.

\subsection{Learning Upper Bound}
\label{subsec:learn-ub}

In this section, we present our algorithm for learning quantum $k$-juntas. Our algorithm can be viewed as analogous to the quantum algorithm of At\i c\i~and~Servedio~\cite{Atici2007} for learning classical $k$-juntas; as such, we start be briefly recalling their high-level approach. 

Given query access to a function $f\isazofunc$, the algorithm of \cite{Atici2007} first determines the set of all relevant variables of non-negligible influence via ``Fourier sampling'' from $f$.\footnote{Recall that Fourier sampling from $f\isazofunc$ refers to drawing $S\sse[n]$ (identified with its $0$/$1$ indicator vector with probability $|\wh{f}(S)|^2$.} It then learns the truth table of the function $f$ restricted to the at most $k$ relevant variables by querying $f$ on each of the $2^k$ possible input strings on the relevant variables. Given membership query access to a unitary $U$, our algorithm proceeds analogously by first learning a set $S$ of relevant qubits with non-negligible influence via ``Pauli sampling'', a subroutine analogous to Fourier Sampling.\footnote{This subroutine is also implicit in \cite{Wang2011}.} Then, we learn an approximation to the part of $U$ that acts only on the subset $S$, the qubits with nonnegligible influence. We do this by reducing the problem to learning a \emph{quantum state}, a task known as \emph{quantum state tomography}.

The connection between learning the unknown unitary $U$ and learning quantum states comes via the \emph{Choi-Jamio\l kowski isomorphism} (described in Section \ref{sec:cj}). In our learning algorithm we will use the following procedure to perform pure state tomography on (copies of) the CJ state $\ket{v(U)}$ in order to learn a description of $U$:






\begin{proposition}[Pure state tomography]
\label{prop:tomography}
    There exists a procedure $\Tomography$ \footnote{This procedure was first devised by Derka, Bu\v{z}ek, and Ekert~\cite{derka1998universal} (and whose sample complexity was determined by Bru{\ss} and Machiavello~\cite{bruss1999optimal}).} that, given $O(d/\eps)$ samples of an unknown $d$-dimensional pure state $\ket{\psi}$, outputs with high probability a classical description of a pure state $\ket{\wh{\psi}} \in \C^d$ such that 
	\[
	    \Big | \left \langle \psi \, \Big | \, \wh{\psi} \right \rangle \Big |^2 \geq 1 - \eps~.
	\]
\end{proposition}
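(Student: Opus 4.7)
The plan is to prove this via a collective covariant measurement on $\ket{\psi}^{\otimes N}$ with $N = O(d/\eps)$, and then compute the expected fidelity of the output using standard Haar-measure moment identities. The procedure $\Tomography$ will take as input $N$ copies of $\ket{\psi}$, perform a single joint POVM on all copies, and return a classical description of the (random) pure state $\ket{\wh{\psi}}$ produced as the measurement outcome. Since this is a well-studied problem, I expect the proof to combine two classical facts: (i) a closed form for $\int |\langle \phi | \psi \rangle|^{2k}\,d\phi$ under the unitarily invariant measure on pure states, and (ii) a Markov-type concentration argument to turn an in-expectation fidelity bound into a high-probability bound.

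Concretely, the covariant POVM I would use has elements $M_\phi\,d\phi := \binom{d+N-1}{N}\ketbra{\phi}{\phi}^{\otimes N}\,d\phi$, where $d\phi$ denotes the unitarily invariant measure on unit vectors in $\C^d$. The identity $\int \ketbra{\phi}{\phi}^{\otimes k}\,d\phi = \Pi_{\mathrm{sym}}/\binom{d+k-1}{k}$ (where $\Pi_{\mathrm{sym}}$ projects onto the symmetric subspace of $(\C^d)^{\otimes k}$) shows that $\{M_\phi\,d\phi\}$ resolves the identity on the symmetric subspace, which is all that is required since $\ket{\psi}^{\otimes N}$ lies in that subspace. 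Applying the POVM to $\ket{\psi}^{\otimes N}$ produces outcome $\ket{\phi}$ with probability density
\[
p(\phi) \;=\; \binom{d+N-1}{N}\,|\langle \phi | \psi\rangle|^{2N}.
\]
The algorithm outputs $\ket{\wh{\psi}} := \ket{\phi}$, which admits a classical description since $\phi$ parametrizes a unit vector in $\C^d$.

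To bound the fidelity, I would use the moment identity $\int |\langle \phi | \psi \rangle|^{2k}\,d\phi = 1/\binom{d+k-1}{k}$, which follows from the symmetric-subspace identity above by contracting with $\ket{\psi}^{\otimes k}$. This immediately gives
\[
\Ex\!\left[\,|\langle \phi | \psi \rangle|^2\,\right] \;=\; \binom{d+N-1}{N}\!\int |\langle\phi|\psi\rangle|^{2(N+1)}\,d\phi \;=\; \frac{\binom{d+N-1}{N}}{\binom{d+N}{N+1}} \;=\; \frac{N+1}{d+N},
\]
so the expected infidelity is $(d-1)/(d+N)$. Choosing $N = \Theta(d/\eps)$ makes the expected infidelity at most $\eps/3$, and applying Markov's inequality to the non-negative random variable $1 - |\langle \phi | \psi \rangle|^2$ yields $|\langle \phi | \psi \rangle|^2 \geq 1-\eps$ with probability at least $2/3$, which can then be boosted in the usual way.

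The main obstacle, and the only point I would have to be careful about, is implementing the continuous POVM $\{M_\phi\,d\phi\}$ as an honest quantum algorithm that outputs a finite-bit classical description. This is handled by replacing the continuous covariant POVM with a discrete complex-projective $t$-design on $\C^d$ for sufficiently large $t = O(N)$; such designs exist and reproduce all moments up to degree $2t$, so every computation above carries through verbatim with the integral replaced by a finite sum of at most $\mathrm{poly}(d,1/\eps)$ outcomes. With this replacement the procedure outputs an explicit label into a finite codebook of unit vectors, giving the claimed classical description and the stated sample complexity $O(d/\eps)$.
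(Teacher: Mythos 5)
The paper does not prove \Cref{prop:tomography}; it is stated as a cited result attributed to Derka--Bu\v{z}ek--Ekert and Bru\ss--Macchiavello, with no proof given in the paper itself. Your reconstruction is the standard covariant-measurement argument from those references: the continuous POVM with density $\binom{d+N-1}{N}\ketbra{\phi}{\phi}^{\otimes N}$ resolving the identity on the symmetric subspace, the Haar moment identity $\int |\langle\phi|\psi\rangle|^{2k}\,d\phi = 1/\binom{d+k-1}{k}$, the expected fidelity $(N+1)/(d+N)$, and Markov's inequality. All of these steps are correct, and $N = O(d/\eps)$ does give expected infidelity $O(\eps)$, hence fidelity $\geq 1-\eps$ with constant probability. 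This is exactly what the cited sources establish, so as far as the mathematics goes you have supplied a valid proof of the proposition the paper takes as a black box.

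One inaccuracy in the final discretization paragraph: replacing the continuous POVM by a complex-projective $t$-design with $t = \Theta(N) = \Theta(d/\eps)$ is a legitimate way to obtain a finite-outcome measurement, but the size of such a design is not $\mathrm{poly}(d,1/\eps)$. Lower bounds on $t$-design size in dimension $d$ grow like $\binom{d+\lceil t/2\rceil-1}{\lceil t/2\rceil}^2$, which for $t = \Theta(d/\eps)$ is roughly $(d/\eps)^{\Theta(d)}$ --- exponential in $d$. This does not affect the sample complexity (which is what the proposition asserts) nor the existence of a finite classical description, but the ``$\mathrm{poly}(d,1/\eps)$ outcomes'' claim should be removed or replaced; a finite (possibly exponentially large) codebook, or a fixed-precision discretization of the output parameters, suffices. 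Also, when you say the $2/3$ success probability ``can then be boosted in the usual way,'' it is worth noting that since the output is a continuous object rather than a bit, the boosting should be done by running the procedure $O(\log(1/\delta))$ times and outputting an estimate that is mutually close to a majority of the others; plain majority vote does not directly apply.
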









Our quantum junta learning algorithm is presented in \Cref{alg:q-learner} and its properties are established in \Cref{thm:learning}. 

\begin{algorithm}[t]
\caption{Quantum $k$-Junta Learner}
\label{alg:q-learner}
\vspace{0.5em}
\textbf{Input:} Oracle access to quantum $k$-junta $U$, error parameter $\eps > 0$ \\[0.5em]
\textbf{Output:} Classical description of $U$ (as a $2^n \times 2^n$ matrix)

\

\QLearner$\pbra{U,\epsilon}$:
\begin{enumerate}
    \item Let $S := \PauliSample\pbra{U, \frac{\eps^2}{4k}, k}$.
	\item Set $t:= O(\frac{4^k}{\eps^2})$. Call $\QStatePrep(U, S)$ $10t$ times\newline to obtain at least $t$ copies of $\ket{\psi_S}$.
	\item Let $\ket{\wh{\psi}} := \Tomography\pbra{\ket{\psi_S}\bra{\psi_S}^{\otimes t},\frac{\eps^2}{4}}$.
	\item Return the unitary encoded by $\ket{\wh{\psi}}$ tensored with $I^{\otimes (n - k)}$
\end{enumerate}
\end{algorithm}

\begin{algorithm}[t]
\caption{The Quantum State Preparation Subroutine (cf. Step~2 of \Cref{alg:q-learner})}
\label{alg:state-prep}
\vspace{0.5em}
\textbf{Input:} Oracle access to unitary $U$, $S\sse [n]$ \\[0.5em]
\textbf{Output:} The quantum state $\ket{\psi_S}$ or ``error''

\

\QStatePrep$\pbra{U, S}$:
\begin{enumerate}
	\item Prepare the state $\ket{v(U)} = \sum_{x \in \Z^n_4} \hat{U} (x) \ket{v(\sigma_x)}$.
	\item  Measure qubits in $\bar{S} \sse [n] $ and $\{ \tilde{l} : l \in \bar{S}\}$ in the Pauli basis $\cbra{ \ket{v(\sigma_x)}}_{x \in \Z_4 ^{n - |S|}}$. 
	\begin{enumerate}
	    \item If the measurement result is $\ket{v(I^{\otimes (n - |S|)})}$, let $\ket{\psi_S}$ be the the unmeasured state \\ on $2|S|$ qubits tensored with $(k - |S|)$ EPR pairs. Return $\ket{\psi_S}$.
	    \item Otherwise, return ``error".
	\end{enumerate}
	
\end{enumerate}
\end{algorithm}

\begin{algorithm}
\caption{The Pauli Sampling Subroutine (cf. Step~1 of \Cref{alg:q-learner})}
\label{alg:pauli-sampling}
\vspace{0.5em}
\textbf{Input:} Oracle access to quantum $k$-junta $U$ on $n$ qubits, threshold $\gamma > 0$ \\[0.5em]
\textbf{Output:} $S\sse[n]$

\

\PauliSample$\pbra{U,\gamma, k}$:
\begin{enumerate}
	\item Initialize $S = \emptyset$.
	\item Repeat the following $O\pbra{\frac{\log k}{\gamma}}$ times:
	\begin{enumerate}
	    \item Prepare the $\ket{v(U)}$ and measure all qubits in the Pauli basis, $\{ \ket{v(\sigma_x)}\}_{x \in \Z_4 ^{n}}$. 
	    \item Given the measurement outcome $\ket{v(\sigma_x)}$, set $S \leftarrow S \cup \supp(x)$
	\end{enumerate}
	\item Return $S$.
\end{enumerate}
\end{algorithm}

\begin{theorem}\label{thm:learning}
Given oracle access to a quantum $k$-junta $U\in\calU_N$ and a $\eps> 0$, \QLearner$(U,\eps)$ (cf. \Cref{alg:q-learner}) outputs, {with probability $9/10$}, a unitary $\wh{U}$ such that $\dist(U, \hat{U}) \leq \eps$. Furthermore, \QLearner\ makes $O\pbra{\frac{k}{\eps} + \frac{4^k}{\eps^2}}$ queries to $U$.
\end{theorem}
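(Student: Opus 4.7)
The plan is to analyze the three subroutines of \Cref{alg:q-learner} in sequence and combine the accumulated errors via the Choi--Jamio\l{}kowski isomorphism, which translates pure-state fidelity into the unitary distance of \Cref{def:distance}. Write $J \sse [n]$, $|J| \leq k$, for the relevant qubits of $U$, so that $\wh U(x) = 0$ whenever $\supp(x) \not\sse J$. For \PauliSample, measuring $\ket{v(U)}$ in the Pauli basis returns $x \in \Z_4^n$ with probability $|\wh U(x)|^2$, so each round puts qubit $j$ into $S$ with probability exactly $\Inf_j[U]$. Running $O(\log k / \gamma)$ rounds with $\gamma = \eps^2/(4k)$ and union-bounding over the at most $k$ relevant qubits, I would show that with probability at least $29/30$ every qubit of influence $\geq \gamma$ ends up in $S$; by subadditivity (\Cref{lemma:inf-useful-props}) this forces $\Inf_{\bar S}[U] \leq k\gamma = \eps^2/4$.

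For \QStatePrep, a calculation analogous to the proof of \Cref{lemma:raw-tester-expectation} shows that each call succeeds---i.e.\ yields the all-EPR outcome on the measured registers---with probability $\sum_{x \,:\, \supp(x) \cap \bar S = \emptyset}|\wh U(x)|^2 = 1 - \Inf_{\bar S}[U] \geq 3/4$, so a Chernoff bound produces at least $t = \Theta(4^k/\eps^2)$ successes out of $10t$ calls except with probability $1/30$. The post-measurement state $\ket{\psi_S}$ lives in a $4^k$-dimensional Hilbert space and, after renormalization, is the Choi--Jamio\l{}kowski state of the truncated operator
\[
\tilde U \;=\; \frac{1}{\sqrt{1-\Inf_{\bar S}[U]}}\sum_{\supp(x)\,\sse\, S}\wh U(x)\,\sigma_x,
\]
which satisfies $|\langle v(\tilde U) \mid v(U)\rangle|^2 = 1 - \Inf_{\bar S}[U] \geq 1 - \eps^2/4$. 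Feeding $t$ copies of $\ket{\psi_S}$ into $\Tomography$ (\Cref{prop:tomography}) then returns, with probability $\geq 29/30$, a pure state $\ket{\wh\psi}$ with $|\langle \wh\psi \mid \psi_S\rangle|^2 \geq 1 - \eps^2/4$; reinterpreting $\ket{\wh\psi}$ as a CJ state on $k$ qubits and tensoring with $I^{\otimes(n-k)}$ yields the output unitary $\wh U$.

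Finally, to convert pure-state fidelities into the unitary distance, I would use the identities $\dist(A,B)^2 = 1 - |\Tr(A^\dagger B)|/N$ and $|\langle v(A) \mid v(B) \rangle|^2 = |\Tr(A^\dagger B)|^2/N^2$, which together imply that fidelity at least $1-\eta$ gives distance at most $\sqrt{\eta}$. Two applications of the triangle inequality---one for the truncation $U \to \tilde U$ and one for the tomography step $\tilde U \to \wh U$---then yield $\dist(U, \wh U) \leq \eps$, and a union bound over the three failure events gives overall success probability $\geq 9/10$. The query complexity follows by summing the contributions of \PauliSample and \QStatePrep, each of which uses one query per iteration. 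The step I expect to be most delicate is the second one: $\tilde U$ as written above is typically not unitary, so some care is needed to verify that $\ket{\psi_S}$ is genuinely the CJ state of an operator of the correct Frobenius norm, and that reinterpreting the tomography output as a bona fide unitary still enjoys the claimed fidelity bound.
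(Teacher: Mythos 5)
Your proof is correct and follows essentially the same route as the paper's: \PauliSample{} ensures $\Inf_{\bar S}[U] \leq k\gamma = \eps^2/4$ by subadditivity, the post-measurement state from \QStatePrep{} is the (renormalized) CJ state of the truncated operator, tomography contributes another $\eps^2/4$ in fidelity loss, and a triangle inequality combines the two halves. The paper splits this into \Cref{lem:closeness-of-phi-U} (which expands the Euclidean norm of $\ket{\psi_S}-\ket{v(U_R)}$ directly) and a separate in-theorem computation for the tomography step, whereas you apply the single identity $\dist(A,B)^2 = 1 - |\Tr(A^\dagger B)|/N = 1 - |\langle v(A)\mid v(B)\rangle|$ uniformly; the two are algebraically identical (the optimal phase in $\dist$ is $\theta=0$ in both applications since the relevant inner products are real positive).

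On the caveat you flag: the first half of your worry about $\tilde U$ not being unitary is actually harmless. The identities you invoke hold for \emph{any} operators with Frobenius norm $\sqrt N$, which is exactly what renormalization guarantees for $\tilde U$, and the paper's definition of $\dist$ is stated on all of $\calM_{N,N}$, not just $\calU_N$. The second half---that the operator $W$ encoded by the tomography output $\ket{\wh\psi}$ need not be unitary, so ``return the unitary encoded by $\ket{\wh\psi}$'' is not well-defined---is a real subtlety, but it is present in the paper's argument as well: \Cref{lem:closeness-of-phi-U} already posits a ``unitary $V$ whose Choi--Jamio\l kowski isomorphism is given by $\ket{\psi_S}$'' even though $\ket{\psi_S}$ is generically the CJ state of a non-unitary contraction, and the theorem proof treats $W$ the same way. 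Patching this in either proof requires a small rounding step (e.g.\ take the unitary in the polar decomposition of $W$, which moves $\ket{\wh\psi}$ by at most a constant factor in fidelity), so this is a shared presentational gap rather than a flaw specific to your argument. Your use of Chernoff in place of the paper's Markov bound for counting \QStatePrep{} successes is an immaterial variant.
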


\begin{proof}
We will analyze the closeness guarantee and the query complexity separately, starting with the former. 

Consider the state $\ket{\psi_S}$ obtained by running \QStatePrep\ in Step 2 of \Cref{alg:q-learner}. $\ket{\psi_S}$ is a pure state with $2k$ qubits; as such, it encodes $k$-qubit unitary matrix $V$ acting on the qubits in the relevant set $R \subset [n], |R| = k$. We have that 
\begin{equation}
    \label{eq:triangle-1}
    \dist\pbra{U, V \otimes I^{\otimes (n - k)}} \leq \frac{\eps}{2}
\end{equation}
by \Cref{lem:closeness-of-phi-U}. 

Let $\hat{U}$ be the output of Algorithm \ref{alg:q-learner}, and let $\hat{U} := W \otimes I^{\otimes (n - k)}$, for a $k$-qubit unitary $W$ on qubits in the relevant set $R$. To show that $\hat{U}$ is close to $U$, we will now show that with probability at least 99/100, 
\begin{equation}
    \label{eq:triangle-2}
    \dist(V,W) \leq \frac{\eps}{2}.
\end{equation}
It would then follow from the triangle inequality and \Cref{eq:triangle-1,eq:triangle-2} that 
\[\dist(\wh{U},U) \leq \eps.\]

To show that $V$ and $W$ are close, consider the output of \Tomography~in Step~3 of \Cref{alg:q-learner}. By \Cref{prop:tomography}, we have that with $O\pbra{4^k/\eps^2}$ copies of $\ket{\psi_S}$,
\[
\Big | \left \langle \psi_S \, \Big | \, \wh{\psi} \right \rangle \Big |^2 \geq 1 - \eps^2/4~
\]

%
%

Note that $\ket{\wh{\psi}}$ encodes $W$ and that $\ket{\psi_S}$ encodes $V$. Writing $K := 2^k$, we have that 
\[\ket{\wh{\psi}} = \sum_{0 \leq i,j < K} \frac{W[i,j]}{\sqrt{K}} \ket{i} \ket{j} \qquad\text{and}\qquad \ket{\psi_S} = \sum_{0 \leq i,j < K} \frac{V[i,j]}{\sqrt{K}} \ket{i} \ket{j}.\]
We then have that 
\begin{align*}
    \dist(V, W)^2 &= \min_\theta \frac{1}{2K} \|e^{i \theta} V - W\|^2\\
    &= \frac{1}{2} \min_{\theta} \sum_{0 \leq i,j < K} \left|\frac{e^{i\theta} V[i,j]}{\sqrt{K}} - \frac{W[i,j]}{\sqrt{K}}\right|^2\\
    &= \frac{1}{2} \min_{\theta} \sum_{0 \leq i,j < K} \pbra{ \left|\frac{e^{i\theta} V[i,j]}{\sqrt{K}}\right|^2 + \left|\frac{W[i,j]}{\sqrt{K}}\right|^2 - 2\cdot\mathfrak{Re}\left(\frac{e^{i\theta}}{K} V[i,j] W[i,j]^*\right) }\\
    &= \frac{1}{2} \min_{\theta} \pbra{ 2 - 2\cdot\mathfrak{Re}\left(\sum_{0 \leq i,j < K} \frac{e^{i\theta}}{K} V[i,j] W[i,j]^*\right) }\\
    &= \frac{1}{2} \pbra{ 2 - 2 \left|\sum_{0 \leq i,j < K}  \frac{1}{K} V[i,j] W[i,j]^\ast\right| }\\
    &= 1 - |\braket{\psi_S\mid \wh{\psi}}|\\
    &\leq \eps^2/4,
\end{align*}

Finally, we turn to the query complexity of \Cref{alg:q-learner}. By \Cref{lem:pauli-sampling}, the query complexity of Step 1 of \Cref{alg:q-learner} (\PauliSample) is 
\[\frac{\log k}{\frac{\eps^2}{4k}} = O\pbra{\frac{k \log k}{\eps^2}}.\]
The number of copies required for the tomography subroutine is 
\[t :=O\pbra{\frac{4^k}{\eps^2}}.\]
As \QStatePrep\ has a small probability of error ($O(\eps^2)$), we can show by Markov's inequality that with $10t$ calls to \QStatePrep, we will obtain at least $t$ copies of $\ket{\psi_S}$ with high probability. In more detail, let $Y$ be the random variable indicating the number of failed executions of \QStatePrep. Then,
\[
 \Ex[Y] \leq 10t\cdot\frac{\eps^2}{4} \qquad\text{and so}\qquad \Pr[Y > 9t] \leq \frac{5\eps^2}{18} \ll 0.01.
\]
Because each call to \QStatePrep\ makes one call to $U$, the total query complexity of \Cref{alg:q-learner} is $O\pbra{\frac{k \log k}{\eps^2} + \frac{4^k}{\eps^2}}$, completing the proof.
\end{proof}

We now prove the following lemma that we used in the proof of \Cref{thm:learning}.

\begin{lemma}\label{lem:closeness-of-phi-U}
Let $V$ denote the unitary whose Choi-Jamiolkowski isomorphism is given by $\ket{\psi_S}$, as obtained from the call to \QStatePrep\ in Step~2 of \Cref{alg:q-learner}. Then
\[\dist\pbra{U, V \otimes I^{\otimes (n - k)}} \leq \frac{\eps}{2}.\]
\end{lemma}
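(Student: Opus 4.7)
The plan is to break the argument into two stages. First, I would analyze \PauliSample\ to show that with high probability the set $S$ returned in Step~1 of \Cref{alg:q-learner} captures essentially all of the influence of $U$, in the precise sense that $\Inf_{\bar S}[U]\leq \eps^2/4$. Second, I would do a direct Parseval calculation in the Pauli basis that bounds $\dist\pbra{U, V\otimes I^{\otimes(n-k)}}$ purely in terms of $\Inf_{\bar S}[U]$.

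For the first stage, let $R\sse [n]$ denote the true set of relevant qubits of $U$, so $\wh U(x) = 0$ whenever $\supp(x)\not\sse R$. Each round of \PauliSample\ draws $x\in \Z_4^n$ with probability $|\wh U(x)|^2$ and appends $\supp(x)\sse R$ to $S$, so deterministically $S\sse R$. For each qubit $i\in R$ with $\Inf_i[U]\geq \gamma := \eps^2/(4k)$, the probability that $i$ is missed in a single round is at most $1-\gamma$, so across $m = O(\log k /\gamma)$ rounds it is missed with probability at most $(1-\gamma)^m \leq 1/(100k)$. A union bound over the (at most $k$) such qubits gives that with probability at least $99/100$, every qubit of influence at least $\gamma$ lies in $S$; the remaining qubits in $R\setminus S$ individually have influence below $\gamma$, and subadditivity (\Cref{lemma:inf-useful-props}) together with $\Inf_i[U]=0$ for $i\notin R$ yields $\Inf_{\bar S}[U]\leq |R\setminus S|\cdot\gamma \leq k\gamma = \eps^2/4$.

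For the second stage, condition on the successful outcome in \QStatePrep, i.e., measuring $\ket{v(I^{\otimes(n-|S|)})}$ on the traced-out qubits (which happens with probability $p := 1-\Inf_{\bar S}[U]$). Expanding $\ket{v(U)} = \sum_x \wh U(x)\ket{v(\sigma_x)}$ and factoring each $\ket{v(\sigma_x)} = \ket{v(\sigma_{x_S})}\otimes \ket{v(\sigma_{x_{\bar S}})}$, the post-measurement state on the retained qubits is $\ket{\phi'} = p^{-1/2}\sum_{x_S} \wh U(x_S, 0_{\bar S})\ket{v(\sigma_{x_S})}$, whose CJ inverse is the $|S|$-qubit matrix $V_S := p^{-1/2}\sum_{x_S}\wh U(x_S, 0_{\bar S})\sigma_{x_S}$. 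The $(k-|S|)$ EPR pairs appended in Step~2(a) correspond to identity on the remaining slots of $R$, so $V\otimes I^{\otimes(n-k)} = V_S\otimes I_{\bar S}$ after the natural qubit reindexing. Writing $U - V_S\otimes I_{\bar S}$ in the Pauli basis and applying Parseval,
\[\frac{1}{N}\|U - V_S\otimes I_{\bar S}\|^2 = \sum_{x : x_{\bar S}\neq 0}|\wh U(x)|^2 + (1-p^{-1/2})^2\sum_{x : x_{\bar S}=0}|\wh U(x)|^2 = (1-p) + (1-p^{-1/2})^2 p = 2(1-\sqrt p).\]
Using $1-\sqrt{1-u}\leq u$ for $u\in[0,1]$ with $u = \Inf_{\bar S}[U]$, this is at most $2\Inf_{\bar S}[U]\leq \eps^2/2$; since $\dist(A,B)^2 \leq \frac{1}{2N}\|A-B\|^2$ by taking $\theta = 0$ in \Cref{def:distance}, we conclude $\dist(U, V\otimes I^{\otimes(n-k)}) \leq \eps/2$ as claimed.

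The main obstacle will be the careful bookkeeping around the two-stage random process: verifying that stage~(a)'s high-probability event ($\Inf_{\bar S}[U]\leq \eps^2/4$) and the ``no error'' outcome of \QStatePrep\ in stage~(b) both succeed with enough constant probability to fit inside the $9/10$ budget of the outer \QLearner. One technical subtlety worth flagging is that the matrix $V_S$ defined above need not be exactly unitary, so the lemma's phrasing of ``the unitary $V$'' should be interpreted as the natural matrix read off from the CJ identification on $\ket{\psi_S}$; since $\dist(\cdot,\cdot)$ is defined on all of $\calM_{N,N}$, the bound itself is unaffected.
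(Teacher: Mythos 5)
Your proof is correct and follows essentially the same route as the paper's: it uses \Cref{lem:pauli-sampling} to argue $\Inf_{\bar S}[U]\leq \eps^2/4$, then does the Parseval calculation on $\ket{v(U)}$ versus the renormalized post-measurement CJ state. Your exact simplification of the Parseval sum to $2(1-\sqrt{p})$ followed by $1-\sqrt{1-u}\leq u$ is a slightly cleaner way to close the estimate than the paper's two-term upper bound, and your observation that the matrix $V$ read off from $\ket{\psi_S}$ need not be exactly unitary (so the distance bound should be read over $\calM_{N,N}$) is a real subtlety the paper's statement glosses over.
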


\begin{proof}
Since $U$ is a $k$-junta, let $R\subset [n]$ be the set of $k$ relevant variables. Let $S \subset R$ be the set of qubits with nonnegligible influence outputted by \PauliSample\ in Step 1 of \Cref{alg:q-learner}. 

Let $U = U_R \otimes I_{\bar{R}}$, where $U_R$ is a $k$-qubit unitary acting only on the relevant qubits in $R$. It is sufficient to show that $\dist(U_R, V) \leq \frac{\eps}{2}$. First, note that

\begin{align*}
\ket{v(U)} &= \ket{v(U_R)} \ket{v(I^{\otimes (n - k)})}
\end{align*}

Thus, when we measure qubits in $\bar{R}$ and $\{ \tilde{\ell} : \ell \in \bar{R}\}$, we always obtain $\ket{v(I^{\otimes (n - k)})}$, as $U$ acts trivially on qubits outside of $R$. 

Now we will consider what happens when we measure qubits in $R - S$. We will use the following decomposition of $\ket{v(U_R)}$.
\begin{align}
\ket{v(U_R)} &= \sum_{x \in Z_4^k}\hat{U_R}(x) \ket{v(\sigma_x)}\\
 &= \sum_{x : \supp(x) \cap \bar{S} = \emptyset} \hat{U}(x) \ket{v(\sigma_x)} + \sum_{x : \supp(x) \cap (R - S) \neq \emptyset} \hat{U}(x) \ket{v(\sigma_x)}\label{eq:phi-U-decomposition}
\end{align}
By \Cref{lem:pauli-sampling}, $S$ will contain all the qubits with influence larger than $ \frac{\eps^2}{4k}$ with high probability. Further, each qubit in $S$ has nonzero influence. This implies that with high probabilty, 

\begin{align*}
\sum_{x : \supp(x) \cap (R - S) \neq \emptyset} |\hat{U}(x)|^2 = \Inf_{\bar{S}}[U]
\leq \sum_{i \in \bar{S}} \Inf_{i}[U]
\leq k \cdot \frac{\eps^2}{4k} 
\end{align*}

By the decomposition in \Cref{eq:phi-U-decomposition}, measuring qubits in $(R - S) \cup \{ \tilde{\ell} : \ell \in R - S\}$ yields the state $\ket{v(I^{\otimes (|R| - |S|)})}$ with probability at least $1 - \frac{\eps^2}{4}$. Conditioned on this event, the $2k-$qubit post measurement state is as follows:

\begin{align*}
    \ket{\psi_S} &= \frac{1}{\sqrt{1 - \Inf_{R - S}[U]}} \sum_{x : \supp(x) \cap \bar{S} = \emptyset} \hat{U}(x) \ket{v(\sigma_x)} \otimes \ket{EPR}^{\otimes (k - |S|)}
\end{align*}

Let $\alpha := \frac{1}{\sqrt{1 - \Inf_{R - S}[U]}} $. Note that $1 \leq \alpha \leq \frac{1}{\sqrt{1 - \frac{\eps^2}{4}}}$. Then,

\begin{align*}
    2\ \dist^2 (V, U_R) &= ||\ket{\psi_S} - \ket{v(U_R)}||^2\\
    &= || \sum_{x : \supp(x) \cap \bar{S} = \emptyset} \hat{U}(x) \ket{v(\sigma_x)} + \sum_{x : \supp(x) \cap (R - S) \neq \emptyset} \hat{U}(x) \ket{v(\sigma_x)}\\ 
    & - \alpha \sum_{x : \supp(x) \cap \bar{S} = \emptyset} \hat{U}(x) \ket{v(\sigma_x)}||^2\\
    &= (\alpha - 1)^2  \sum_{x : \supp(x) \cap \bar{S} = \emptyset} |\hat{U}(x)|^2 + \sum_{x : \supp(x) \cap (R - S) \neq \emptyset} |\hat{U}(x)|^2\\
    &\leq (\frac{1}{\sqrt{1 - \frac{\eps^2}{4}}} - 1)^2 + \Inf_{R - S}[U]\\
    &\leq 2 \frac{\eps^2}{4}
\end{align*}

This shows that $ \dist (V \otimes I^{\otimes (n - k)}, U) = \dist (V, U_R) \leq \eps/2$.
\end{proof}

The following lemma is analogous to Lemma IV.4~in~\cite{Atici2007}.

\begin{lemma}\label{lem:pauli-sampling}
Let $U\in\calU_N$ be a unitary acting non-trivially on qubits in $R \subset [n]$. Then \PauliSample$(U,\eps,|R|)$ makes $t = O\pbra{\frac{\log |R|}{\eps}}$ membership queries to $U$ and outputs with high probability a list $S \subset [n]$ that satisfies the following properties:
\begin{enumerate}
    \item $S$ contains all qubits $i\in[n]$ such that $\Inf_i [U] \geq \eps$; and 
    \item All qubits $i$ in $S$ have nonzero influence, i.e. $\Inf_i [U] > 0$.
\end{enumerate}
\end{lemma}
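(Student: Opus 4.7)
The plan is to analyze each property separately, leveraging the fact that measuring $\ket{v(U)}$ in the Bell basis $\{\ket{v(\sigma_x)}\}_{x\in\Z_4^n}$ is exactly \emph{Pauli sampling}: the outcome $x$ is observed with probability $|\wh{U}(x)|^2$, since $\ket{v(U)} = \sum_{x\in\Z_4^n}\wh{U}(x)\ket{v(\sigma_x)}$. The query complexity is immediate: the loop in Step~2 of \Cref{alg:pauli-sampling} is executed $t = O(\log|R|/\eps)$ times, and each iteration prepares $\ket{v(U)}$ using a single query to $U$.

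For Property~2, I would observe that a qubit $i$ is added to $S$ only if the subroutine samples some $x\in\Z_4^n$ with $x_i\neq 0$ and $\wh{U}(x)\neq 0$ (measurement outcomes with zero Pauli coefficient occur with probability $0$). But then, by the Fourier formula for influence,
\[
\Inf_i[U] = \sum_{y : y_i \neq 0} |\wh{U}(y)|^2 \geq |\wh{U}(x)|^2 > 0,
\]
so every qubit placed in $S$ has strictly positive influence. This step does not use randomness and holds deterministically.

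For Property~1, fix any qubit $i\in[n]$ with $\Inf_i[U]\geq\eps$. A single iteration of the loop adds $i$ to $S$ with probability exactly $\sum_{x: x_i\neq 0}|\wh{U}(x)|^2 = \Inf_i[U]\geq\eps$. Hence, the probability that $i$ fails to be added in all $t$ independent iterations is at most $(1-\eps)^t \leq e^{-\eps t}$. The key structural observation is that the set of candidate qubits to union-bound over has size at most $|R|$: for any $i\notin R$, the unitary $U$ acts as identity on qubit $i$, which forces $\wh{U}(x)=0$ whenever $x_i\neq 0$, and therefore $\Inf_i[U]=0$; such qubits can be ignored. Choosing $t = C\log|R|/\eps$ with a sufficiently large absolute constant $C$ makes $e^{-\eps t}\leq 1/(10|R|)$, and a union bound over the at most $|R|$ qubits with $\Inf_i[U]\geq\eps$ shows that every such qubit appears in $S$ with probability at least $9/10$.

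There is no real technical obstacle here; the only mild subtlety is justifying that only qubits in $R$ need to be union-bounded over, which follows from the tensor-product structure of a $k$-junta applied to the Fourier formula for $\Inf_i[U]$. Combining the three items above gives the desired conclusion.
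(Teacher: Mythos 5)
Your proposal is correct and uses essentially the same argument as the paper: each iteration adds a fixed high-influence qubit to $S$ with probability $\Inf_i[U]\geq\eps$, so $t=O(\log|R|/\eps)$ iterations plus a union bound over the at most $|R|$ qubits of positive influence establishes Property~1, while Property~2 is immediate since a sampled $x$ with $x_i\neq 0$ has $\wh{U}(x)\neq 0$. The only addition is that you make explicit the (correct) observation that qubits outside $R$ have zero influence, which the paper leaves implicit when union-bounding over $|R|$ qubits.
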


\begin{proof}
 If $\Inf_i [U] \geq \eps$, then the probability $i$ does not occur in S is at most $(1 - \eps)^{t} \leq \frac{1}{100|R|}$. By the union bound, $S$ will contain every $i$ such that $\Inf_i [U] \geq \eps$ with probability at least 99/100. The second item follows from the fact that if $i\in[n]$ is Pauli-sampled, there must exist $x\in\Z_4^n, i \in \supp(x)$ such that $\wh{U}(x) \neq 0$.
\end{proof}

\subsection{Learning Lower Bound}
\label{subsec:learn-lb}

Finally, we present a nearly-matching lower bound for the query complexity of learning quantum juntas. Although it is commonly stated that process tomography requires $\Omega(4^n)$ queries, we have not been able to identify in the literature a formal lower bound proof. Thus, we provide the following proof for completeness.

\begin{theorem}\label{thm:learning-lb}
Any algorithm for learning quantum $k$-juntas with error $\eps$ requires $\Omega(4^k \log (1/\eps)/k)$ queries.
\end{theorem}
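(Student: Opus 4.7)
The plan is to prove the lower bound via a reduction from the Input Guessing communication task. Fix $k$ and $\eps > 0$. I will first exhibit a family of $N = 2^{\Omega(4^k \log(1/\eps))}$ quantum $k$-juntas $U_1, \ldots, U_N$ on $n$ qubits that are pairwise at distance at least $3\eps$ in the $\dist$ metric. By the standard ball-packing (volume) argument on the projective unitary group $\calU_{2^k}/U(1)$---a smooth manifold of real dimension $4^k - 1$ with finite Haar volume under a bi-invariant metric equivalent, up to constants, to $\dist$---there exist at least $(c/\eps)^{4^k - 1}$ pairwise $3\eps$-separated representatives, giving $\log N = \Omega(4^k \log(1/\eps))$. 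Each such element $V_i \in \calU_{2^k}$ is then lifted to a quantum $k$-junta $U_i := V_i \otimes I^{\otimes(n-k)}$, with the separation inherited because $\dist(V_1 \otimes I, V_2 \otimes I) = \dist(V_1, V_2)$ as noted in \Cref{subsec:unitary}.

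Next, consider the following two-party communication task: Alice receives an index $\mathbf{i}$ chosen uniformly at random from $[N]$, and Bob must output $\mathbf{i}$ with probability at least $2/3$ using quantum communication (and possibly shared entanglement). By Nayak's lower bound~\cite{nayak1999optimal}, any such protocol requires $\Omega(\log N)$ qubits of total communication; this follows either directly from the random access code bound, or via the Holevo--Fano route (a $q$-qubit transcript carries at most $O(q)$ bits of accessible information about $\mathbf{i}$, while constant-probability recovery requires $\Omega(\log N)$ bits of mutual information).

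Now suppose $\mathcal{L}$ is any algorithm making $T$ queries to the oracles for $U$ and $U^\dagger$ and outputting $\wh{U}$ with $\dist(\wh{U}, U) \leq \eps$ with probability at least $9/10$. Bob simulates $\mathcal{L}$ on the oracle $U_{\mathbf{i}}$ as follows: since the junta is supported on the fixed set $\{1, \ldots, k\}$, for each query Bob sends the first $k$ qubits of the query state to Alice, Alice applies $V_{\mathbf{i}}$ (or $V_{\mathbf{i}}^\dagger$) to them, and sends them back. Each query costs $2k$ qubits of quantum communication, independent of how these qubits are entangled with the remaining $n - k$ qubits and Bob's workspace. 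After all $T$ queries Bob holds $\wh{U}$ with $\dist(\wh{U}, U_{\mathbf{i}}) \leq \eps$, and by the $3\eps$-packing property together with the triangle inequality he recovers $\mathbf{i}$ uniquely by choosing the nearest $U_j$, succeeding with probability at least $9/10 > 2/3$.

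Combining the two ingredients, the total communication $O(Tk)$ of the simulated protocol must satisfy $O(Tk) \geq \Omega(\log N) = \Omega(4^k \log(1/\eps))$, yielding $T = \Omega(4^k \log(1/\eps)/k)$ as claimed. The main technical hurdle is the packing lower bound on $\calU_{2^k}/U(1)$ under the $\dist$ metric; this is standard but would require carefully calibrating $\dist$ to a metric in which Haar volumes are easy to estimate (e.g., the geodesic or Frobenius metric on $PU(2^k)$) and invoking the usual ratio-of-volumes argument. The query-simulation step and the application of Nayak's bound are then essentially routine.
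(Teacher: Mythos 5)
Your proposal is correct and follows essentially the same route as the paper: a reduction from Nayak's \textsc{Input Guessing} lower bound, a volume/packing argument over $k$-qubit unitaries (modulo global phase) to produce $2^{\Omega(4^k \log(1/\eps))}$ well-separated juntas, and a simulation in which Bob forwards only the $k$ relevant qubits per query to Alice, costing $O(k)$ qubits of communication per query. The only cosmetic differences are the choice of separation parameter ($3\eps$ vs.\ a maximal $\eps$-packing) and your explicit mention of the $4^k - 1$ dimension of $PU(2^k)$, neither of which changes the argument.
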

\begin{proof}
We prove this lower bound via a communication complexity argument. In particular, we reduce the \textsc{Input Guessing} game to learning quantum juntas. The \textsc{Input Guessing} game with domain size $K$ is a two-party communication task where one party (named Alice) receives an uniformly random input $x$ from $\{1,2,\ldots,K\}$ and the other party (named Bob) has to output a guess for $x$ after engaging in two-way communicating with Alice. We consider the model of \emph{quantum communication}, where Alice and Bob can exchange qubits with each other. A classic result of Nayak~\cite{nayak1999optimal} implies the following lower bound on the communication complexity of the \textsc{Input Guessing} game. 

\begin{theorem}[Lower bound for \textsc{Input Guessing} game~\cite{nayak1999optimal}]
\label{thm:nayak}
Any quantum communication protocol that solves the \textsc{Input Guessing} game with domain size $K$ and success probability $p$ requires exchanging $\log K - \log \frac{1}{p}$ qubits between the parties.
\end{theorem}

Let $\calA$ denote an algorithm that learns quantum $k$-juntas, assuming it is told which $k$ of the $n$ qubits are relevant. Note that the problem of learning quantum $k$-juntas without this additional information is at least as hard. Without loss of generality, assume the first $k$ qubits are relevant. 

Suppose $\calA$ makes $q$ queries, achieves error $\eps$ and achieves constant success probability. Then we construct a quantum communication protocol for the \textsc{Input Guessing} game with domain size $K = \Omega \Big( (1/\eps)^{4^k} \Big)$, communication complexity $O(kq)$, and constant success probability. By \Cref{thm:nayak}, this implies that
\[
    kq \geq \Omega(\log K) = \Omega(4^k \log(1/\eps))
\]
which implies the desired lower bound.

Let $K$ denote the size of a maximal $\eps$-packing of the space of $k$-qubit unitary matrices, with respect to the distance measure $\dist(\cdot,\cdot)$. In other words, this is the maximal number of disjoint $\eps$-balls in the space of $k$-qubit unitaries. By standard volume arguments (see ~\cite{szarek1997metric}), since $\dist(\cdot,\cdot)$ is a unitarily invariant distance measure, 
$K$ is at least $\Omega \Big( (1/\eps)^{4^k} \Big)$. Let $\{U_1,\ldots,U_K\}$ denote an enumeration of the maximal $\eps$-packing.

Suppose Alice gets a random input $x \in \{1,2,\ldots,K\}$. Bob will simulate the algorithm $\cal{A}$. Whenever $\cal{A}$ has to make a query to the oracle, Bob sends the first $k$ qubits of his query register to Alice, then Alice applies the $k$-qubit unitary $U_x$ to the register, and then sends the register back. Indeed, if $\calA$'s query register is on $n$ qubits total, then the effective $n$-qubit unitary applied in this simulated execution of $\calA$ is $U = U_x \otimes I$ where $U_x$ acts on the first $k$ qubits and $I$ acts on the remaining $n - k$ qubits. Bob continues in this fashion until the algorithm $\cal{A}$ terminates and outputs (with constant probability) a classical description of a unitary $V$ such that $\dist(U_x \otimes I, V) \leq \eps$ where, by the correctness of the algorithm $\calA$, $V$ is a quantum $k$-junta $V' \otimes I$ that acts trivially on all qubits except the first $k$. Thus we have that $\dist(U_x,V') \leq \eps$, and by definition of an $\eps$-packing, $U_x$ is the unique member of the packing that has distance $\eps$ to $V'$. Thus Bob can uniquely identify Alice's input $x$ with constant probability. The total communication complexity of this protocol is $2kq$. 
\end{proof}




\section*{Acknowledgements}

S.N. is supported by NSF grants CCF-1563155 and by CCF-1763970. H.Y. is supported by AFOSR award
FA9550-21-1-0040, NSF CAREER award CCF-2144219, and the Sloan Foundation. The authors would like to thank Rocco A. Servedio and Xi Chen for helpful discussions. We would also like to thank Zongbo Bao for pointing out an error in \Cref{alg:raw-IT} in an earlier version of this paper, as well as Vishnu Iyer and Michael Whitmeyer for helpful comments. 

\bibliography{references}{}

\newcommand{\etalchar}[1]{$^{#1}$}
\begin{thebibliography}{ABRdW16}

\bibitem[ABRdW16]{Ambainis2016}
Andris Ambainis, Aleksandrs Belovs, Oded Regev, and Ronald de~Wolf.
\newblock Efficient quantum algorithms for (gapped) group testing and junta
  testing.
\newblock In Robert Krauthgamer, editor, {\em Proceedings of the Twenty-Seventh
  Annual {ACM-SIAM} Symposium on Discrete Algorithms, {SODA} 2016, Arlington,
  VA, USA, January 10-12, 2016}, pages 903--922. {SIAM}, 2016.

\bibitem[AS07]{Atici2007}
Alp At{\i}c{\i} and Rocco~A. Servedio.
\newblock Quantum algorithms for learning and testing juntas.
\newblock {\em Quantum Inf. Process.}, 6(5):323--348, 2007.

\bibitem[Bel19]{Belovs2019}
Aleksandrs Belovs.
\newblock Quantum algorithm for distribution-free junta testing.
\newblock In Ren{\'{e}} van Bevern and Gregory Kucherov, editors, {\em Computer
  Science - Theory and Applications - 14th International Computer Science
  Symposium in Russia, {CSR} 2019, Novosibirsk, Russia, July 1-5, 2019,
  Proceedings}, volume 11532 of {\em Lecture Notes in Computer Science}, pages
  50--59. Springer, 2019.

\bibitem[BKT17]{Bun2017}
Mark Bun, Robin Kothari, and Justin Thaler.
\newblock The polynomial method strikes back: Tight quantum query bounds via
  dual polynomials.
\newblock {\em CoRR}, abs/1710.09079, 2017.

\bibitem[Bla09]{Blais2009}
Eric Blais.
\newblock Testing juntas nearly optimally.
\newblock In Michael Mitzenmacher, editor, {\em Proceedings of the 41st Annual
  {ACM} Symposium on Theory of Computing, {STOC} 2009, Bethesda, MD, USA, May
  31 - June 2, 2009}, pages 151--158. {ACM}, 2009.

\bibitem[BM99]{bruss1999optimal}
Dagmar Bru{\ss} and Chiara Macchiavello.
\newblock Optimal state estimation for $d$-dimensional quantum systems.
\newblock {\em Physics Letters A}, 253(5-6):249--251, 1999.

\bibitem[Bsh19]{Bshouty2019}
Nader~H. Bshouty.
\newblock Almost optimal distribution-free junta testing.
\newblock In Amir Shpilka, editor, {\em 34th Computational Complexity
  Conference, {CCC} 2019, July 18-20, 2019, New Brunswick, NJ, {USA}}, volume
  137 of {\em LIPIcs}, pages 2:1--2:13. Schloss Dagstuhl - Leibniz-Zentrum
  f{\"{u}}r Informatik, 2019.

\bibitem[BY22]{Bhattacharyya2022}
Arnab Bhattacharyya and Yuichi Yoshida.
\newblock {\em Property Testing - Problems and Techniques}.
\newblock Springer, 2022.

\bibitem[CG04]{Chockler2004}
Hana Chockler and Dan Gutfreund.
\newblock A lower bound for testing juntas.
\newblock {\em Inf. Process. Lett.}, 90(6):301--305, 2004.

\bibitem[Cho75]{Choi1975}
Man-Duen Choi.
\newblock Completely positive linear maps on complex matrices.
\newblock {\em Linear Algebra and its Applications}, 10(3):285--290, 1975.

\bibitem[CN97]{chuang1997prescription}
Isaac~L Chuang and Michael~A Nielsen.
\newblock Prescription for experimental determination of the dynamics of a
  quantum black box.
\newblock {\em Journal of Modern Optics}, 44(11-12):2455--2467, 1997.

\bibitem[DBE98]{derka1998universal}
Radoslav Derka, Vladimir Bu\v{z}ek, and Artur~K Ekert.
\newblock Universal algorithm for optimal estimation of quantum states from
  finite ensembles via realizable generalized measurement.
\newblock {\em Physical Review Letters}, 80(8):1571, 1998.

\bibitem[FKR{\etalchar{+}}04]{Fischer2004}
Eldar Fischer, Guy Kindler, Dana Ron, Shmuel Safra, and Alex Samorodnitsky.
\newblock Testing juntas.
\newblock {\em J. Comput. Syst. Sci.}, 68(4):753--787, 2004.

\bibitem[GJ14]{gutoski2014process}
Gus Gutoski and Nathaniel Johnston.
\newblock Process tomography for unitary quantum channels.
\newblock {\em Journal of Mathematical Physics}, 55(3):032201, 2014.

\bibitem[Gol10]{Goldreich2010}
Oded Goldreich.
\newblock A brief introduction to property testing.
\newblock In Oded Goldreich, editor, {\em Property Testing - Current Research
  and Surveys}, volume 6390 of {\em Lecture Notes in Computer Science}, pages
  1--5. Springer, 2010.

\bibitem[ITW21]{Iyer2021}
Vishnu Iyer, Avishay Tal, and Michael Whitmeyer.
\newblock Junta distance approximation with sub-exponential queries.
\newblock In Valentine Kabanets, editor, {\em 36th Computational Complexity
  Conference, {CCC} 2021, July 20-23, 2021, Toronto, Ontario, Canada (Virtual
  Conference)}, volume 200 of {\em LIPIcs}, pages 24:1--24:38. Schloss Dagstuhl
  - Leibniz-Zentrum f{\"{u}}r Informatik, 2021.

\bibitem[Jam72]{Jamiolkowski1972}
A.~Jamio{\l}kowski.
\newblock Linear transformations which preserve trace and positive
  semidefiniteness of operators.
\newblock {\em Reports on Mathematical Physics}, 3(4):275--278, 1972.

\bibitem[MdW16]{Montanaro2016}
Ashley Montanaro and Ronald de~Wolf.
\newblock A survey of quantum property testing.
\newblock {\em Theory Comput.}, 7:1--81, 2016.

\bibitem[MO10]{Montanaro2010}
Ashley Montanaro and Tobias Osborne.
\newblock Quantum boolean functions.
\newblock {\em Chic. J. Theor. Comput. Sci.}, 2010, 2010.

\bibitem[Nay99]{nayak1999optimal}
Ashwin Nayak.
\newblock Optimal lower bounds for quantum automata and random access codes.
\newblock In {\em 40th Annual Symposium on Foundations of Computer Science},
  pages 369--376. IEEE, 1999.

\bibitem[NC10]{Nielsen2010}
Michael~A. Nielsen and Isaac~L. Chuang.
\newblock {\em Quantum Computation and Quantum Information: 10th Anniversary
  Edition}.
\newblock Cambridge University Press, 2010.

\bibitem[O'D14]{ODonnell2014a}
Ryan O'Donnell.
\newblock {\em Analysis of Boolean Functions}.
\newblock Cambridge University Press, 2014.

\bibitem[PRS02]{Parnas2002}
Michal Parnas, Dana Ron, and Alex Samorodnitsky.
\newblock Testing basic boolean formulae.
\newblock {\em {SIAM} J. Discret. Math.}, 16(1):20--46, 2002.

\bibitem[PRW19]{PRW-mono}
Ramesh Krishnan~S. Pallavoor, Sofya Raskhodnikova, and Erik Waingarten.
\newblock Approximating the distance to monotonicity of boolean functions.
\newblock volume abs/1911.06924, 2019.

\bibitem[Sa{\u{g}}18]{Saglam2018}
Mert Sa{\u{g}}lam.
\newblock Near log-convexity of measured heat in (discrete) time and
  consequences.
\newblock In {\em 59th {IEEE} Annual Symposium on Foundations of Computer
  Science, {FOCS} 2018, Paris, France, October 7-9, 2018}, pages 967--978,
  2018.

\bibitem[Sza97]{szarek1997metric}
Stanislaw~J Szarek.
\newblock Metric entropy of homogeneous spaces.
\newblock {\em arXiv preprint math/9701213}, 1997.

\bibitem[Wan11]{Wang2011}
Guoming Wang.
\newblock Property testing of unitary operators.
\newblock {\em Physical review. A, Atomic, molecular, and optical physics},
  84(5), November 2011.

\bibitem[Wat18]{Watrous2018}
John Watrous.
\newblock {\em The Theory of Quantum Information}.
\newblock Cambridge University Press, 2018.

\bibitem[Wil17]{Wilde2017}
Mark~M. Wilde.
\newblock {\em Quantum Information Theory}.
\newblock Cambridge University Press, 2017.

\bibitem[Zha19]{Zhang2019}
Xiaojin Zhang.
\newblock Near-optimal algorithm for distribution-free junta testing.
\newblock {\em ArXiv}, \url{abs/1911.10833}, 2019.

\end{thebibliography}
\bibliographystyle{alpha}

\appendix
%
%
%
\section{An Alternative Characterization of Influence}
\label{appendix:influence}

Below is an alternative characterization of the influence of a set of variables on a unitary. Our learning and testing algorithms do not make use of this characterization, but it may be of independent interest. 

\begin{lemma} (Equivalent characterization of influence)
\label{lemma:useful-inf-def}
Given $U\in\calU_N$ and $j\in[n]$, we have 
\[\Inf_j[U] = 1 - \frac{1}{2^{n + 1}} \Tr\pbra{(\Tr_j U^\dagger) (\Tr_j U)}.\]
More generally, for $S\sse[n]$, we have 
\[\Inf_S[U] = 1 - \frac{1}{2^{n + |S|}} \Tr\pbra{(\Tr_S U^\dagger) (\Tr_S U)}.\]
\end{lemma}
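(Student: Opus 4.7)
The plan is to expand everything in the Pauli basis and show that tracing out the qubits in $S$ isolates exactly the Pauli coefficients supported entirely on $\overline{S}$. Combined with Parseval's identity, this directly gives the claimed formula.

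First, I would write $U = \sum_{x \in \mathbb{Z}_4^n} \widehat{U}(x) \sigma_x$ and recall that $\Tr(\sigma_0) = 2$ while $\Tr(\sigma_i) = 0$ for $i \in \{1,2,3\}$. Since $\sigma_x = \sigma_{x_{\overline{S}}} \otimes \sigma_{x_S}$, the definition of the partial trace gives $\Tr_S(\sigma_x) = \sigma_{x_{\overline{S}}} \cdot \Tr(\sigma_{x_S})$, which is $2^{|S|}\sigma_{x_{\overline{S}}}$ when $\supp(x) \cap S = \emptyset$ and zero otherwise. Therefore
\[
\Tr_S U \;=\; 2^{|S|} \sum_{x \,:\, \supp(x)\cap S = \emptyset} \widehat{U}(x)\, \sigma_{x_{\overline{S}}}.
\]

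Next, because Pauli matrices are Hermitian, $\widehat{U^\dagger}(x) = \widehat{U}(x)^\ast$, and the analogous expansion holds for $\Tr_S U^\dagger$. Multiplying the two expressions and using the orthogonality relation $\Tr(\sigma_a \sigma_b) = 2^{n-|S|} \delta_{a,b}$ for $a,b \in \mathbb{Z}_4^{\overline{S}}$, I get
\[
\Tr\!\bigl((\Tr_S U^\dagger)(\Tr_S U)\bigr) \;=\; 4^{|S|} \cdot 2^{n-|S|} \!\!\sum_{x \,:\, \supp(x)\cap S = \emptyset} |\widehat{U}(x)|^2 \;=\; 2^{n+|S|} \!\!\sum_{x \,:\, \supp(x)\cap S = \emptyset} |\widehat{U}(x)|^2.
\]

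Finally, since $U$ is unitary, Parseval gives $\sum_{x \in \mathbb{Z}_4^n} |\widehat{U}(x)|^2 = 1$, so
\[
1 - \frac{1}{2^{n+|S|}} \Tr\!\bigl((\Tr_S U^\dagger)(\Tr_S U)\bigr) \;=\; 1 - \!\!\sum_{x \,:\, \supp(x)\cap S = \emptyset}\!\! |\widehat{U}(x)|^2 \;=\; \!\!\sum_{x \,:\, \supp(x)\cap S \neq \emptyset}\!\! |\widehat{U}(x)|^2 \;=\; \Inf_S[U],
\]
with the single-qubit case $S = \{j\}$ being just the specialization $|S|=1$. There is no real obstacle here; the only thing to keep track of carefully is the factor $2^{|S|}$ that comes out of each partial trace and the factor $2^{n-|S|}$ from the orthogonality of Pauli operators on $\overline{S}$, which together produce the normalization $2^{n+|S|}$ in the statement.
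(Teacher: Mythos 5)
Your proof is correct and follows essentially the same route as the paper: expand $\Tr_S U$ in the Pauli basis (using that $\Tr_S\sigma_x = 2^{|S|}\sigma_{x_{\overline{S}}}$ when $\supp(x)\cap S=\emptyset$ and vanishes otherwise), then apply Pauli orthogonality/Parseval to reduce the trace to $\sum_{\supp(x)\cap S=\emptyset}|\widehat{U}(x)|^2$, and finally use $\sum_x|\widehat{U}(x)|^2=1$. The only cosmetic difference is that the paper packages the orthogonality step as Parseval for the Hilbert--Schmidt inner product on $\overline{S}$ qubits, whereas you write out $\Tr(\sigma_a\sigma_b)=2^{n-|S|}\delta_{a,b}$ explicitly.
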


\begin{proof}

Note that for $S\sse[n]$ and $x\in\Z_4^n$, we have that 
\[\Tr_S(\sigma_x) = 0 \qquad\text{if and only if}\qquad S\cap \supp(x) \neq \emptyset.\]
This is immediate from the fact that the only Pauli matrix with non-zero trace is $\sigma_0 = I$. Writing $U$ in the Pauli basis, $\Tr_S(U)$ has the following form
\begin{equation}
	\Tr_S(U) = \Tr_S\pbra{\sum_{x\in\Z_4^n}\wh{U}(x)\sigma_x} = 2^{|S|}\sum_{x : \supp(x)\cap S = \emptyset} \wh{U}(x)\sigma_x.\label{eq:partial-trace-U}
\end{equation}
Using this characterization, it follows that
\begin{align}
\frac{1}{2^{n - |S|}} \Tr\pbra{\Tr_S(U)^\dagger \Tr_S (U)} &= \frac{1}{2^{n - |S|}} \abra{ \Tr_S(U), \Tr_S(U)} \nonumber\\
&= 2^{2|S|} \sum_{x : \supp(x)\cap S = \emptyset} |\wh{U}(x)|^2 \nonumber\\
&= 2^{2|S|} \pbra{1 - \Inf_S[U]} \nonumber
\end{align}
where the second equality follows from Parseval's formula and equation \ref{eq:partial-trace-U}, while the final equality is because $\|U\|^2=N$ for all unitaries $U\in\calU_N$. The lemma follows by rearranging the final expression above.
\end{proof}

\end{document}